\documentclass[prx,aps,10pt,english,superscriptaddress,longbibliography, nobibnotes]{revtex4-2}
\usepackage{graphicx}
\usepackage{amsmath}
\usepackage{amssymb}
\usepackage{amsfonts}
\usepackage{amsthm}
\usepackage{subfigure}
\usepackage{dsfont}
\usepackage{appendix}
\usepackage{tikz}
\usepackage{braket}
\usepackage{listings}
\usepackage{comment}
\usepackage{array}
\usepackage{booktabs}
\AtBeginDocument{\renewcommand{\selectlanguage}[1]{}}
\usepackage[hidelinks,colorlinks=false]{hyperref}

\newtheorem{theorem}{Theorem}

\DeclareMathOperator{\tr}{tr}

\begin{document}
\title{Better Pauli Channel Learning with Maximum Likelihood Estimation}
\def\urbana{
Institute for Condensed Matter Theory and IQUIST and NCSA Center for Artificial Intelligence Innovation and Department of Physics, University of Illinois at Urbana-Champaign, IL 61801, USA
}
\def\ibm{IBM Quantum, IBM T.J. Watson Research Center, Yorktown Heights, NY 10598, USA}

\author{Daniel Belkin}\affiliation{\urbana}
\author{Faisal Alam}\affiliation{\urbana}
\author{Matthew Thibodeau}\affiliation{\urbana}
\author{Alireza Seif}\affiliation{\ibm}
\author{Ewout van den Berg}\affiliation{\ibm}
\author{Bryan K. Clark}\email{bkclark@illinois.edu}\affiliation{\urbana}

\date{\today}
\begin{abstract}
    Error mitigation in a noisy quantum device requires a very good estimate of the noise channel. The accuracy of probabilistic error cancellation is often limited by the high sample complexity of channel tomography. In principle, optimal sample complexity is attained by maximum likelihood estimation (MLE), but MLE is computationally challenging. We show that MLE can be made computationally tractable in certain cases of interest. For the common case of a 1D-local sparse Pauli-Lindblad channel, the likelihood function reduces to an efficiently-evaluable Bayesian network. 
    We show that the resulting computation leads to substantially improved tomography. In addition, we demonstrate by simulation that this can lead to meaningful improvements to the overhead of error mitigation. We also discuss possible extensions of our algorithm to more general settings, such as non-1D circuits and non-Pauli errors.
\end{abstract}

\maketitle

\section{Introduction}
Quantum computers are inherently noisy. Achieving practical quantum advantage requires techniques to robustly correct errors induced by this noise. While quantum error correction is a long-term solution to this problem, current quantum computers are not yet capable of performing fault-tolerant logical computation on many qubits. Instead, error mitigation techniques must be used\cite{filippov_scalable_2023, guo_quantum_2022}. Even near-term applications of quantum error correction (QEC) will be limited in their logical qubit counts and code distances, since physical qubits are limited in number and connectivity\cite{katabarwa_early_2024}. Error-corrected circuits are likely to require additional error mitigation on top of them to fully correct observables, since error mitigation allows one to use classical postprocessing rather than requiring additional physical qubits\cite{wagner_learning_2023, zheng_efficient_2026, zhou_error_2025}.

Probabilistic error cancellation (PEC) stands out as a powerful framework for error mitigation\cite{temme_error_2017, van_den_berg_probabilistic_2023, gupta_probabilistic_2024, chen_disambiguating_2026}.
It offers theoretical guarantees of accuracy and convergence. This comes at the cost of requiring very precise knowledge of the noise model and a large sampling overhead. This stands in contrast to zero-noise extrapolation (ZNE), an alternative approach with low overhead but only heuristic accuracy and convergence\cite{temme_error_2017, li_efficient_2017, kim_evidence_2023, aharonov_reliable_2026}. 

The core idea of PEC is to use the inverse of the noise channel to cancel the noise. This inverse cannot be implemented as a physical quantum channel, but a combination of quantum operations and classical postprocessing can imitate it in expectation. The strategy involves running many variants of the circuit and taking a weighted combination of the results, with weights designed to cancel out the effect of the noise. In order to succeed, however, this interference needs to be very precisely calibrated; PEC is highly sensitive to tomographic precision. 

The difficulty of channel learning depends on the family of noise models considered. Arbitrary channels are intractable to even represent, let alone learn. Realistic assumptions are thus necessary to make any progress possible. One plausible family of noise models is gate-based, where the noise channel depends only on which gate was applied and affects only those qubits on which the gate acts. This sort of model is relatively straightforward to learn\cite{poyatos_complete_1997}, but cross-talk between qubits is usually too important to neglect\cite{sarovar_detecting_2020, van_den_berg_probabilistic_2023}. In the worst case, on the other hand, the structure of the noise could depend on the entire history of the circuit in some complicated way. Such a model would be very difficult to learn. Here we make an intermediate assumption: Noise may be correlated in space but is uncorrelated in time. The noise channel then depends only on each layer of the circuit (i.e., the set of gates applied concurrently)\cite{carignan-dugas_error_2023}. We will also assume that the noise has been Pauli twirled, so that the channel is a Pauli channel, and impose a sparsity constraint to keep the total parameter count polynomial in the number of qubits\cite{geller_efficient_2013, hashim_randomized_2021, wallman_noise_2016, van_den_berg_probabilistic_2023}.

The result of these assumptions is that tomography may be done separately for each layer of the desired circuit. For deep circuits, the sample complexity of tomography adds up to a very significant cost. It follows that even modest improvements to the sample complexity of tomography can result in meaningful reductions in runtime. Beyond error mitigation, channel learning also provides vital diagnostic insights for hardware design, cross-talk characterization, and the optimization of tailored QEC codes\cite{nielsen_gate_2021, nautrup_optimizing_2019, sivak_optimization_2024}. 

The standard approach to learning Pauli channels is based on estimating each Pauli fidelity directly from measurements\cite{flammia_efficient_2020}. We term this the Empirical Pauli Fidelities (EPF) estimator. The EPF estimator has asymptotic convergence guarantees and is relatively natural to construct. It thus may seem that no improvement is possible. In this work, however, we will show that maximum likelihood estimation (MLE) does substantially better. This is because EPF discards important information from the training data. MLE makes use of all the available information, resulting in better sample complexity and significantly more accurate error-mitigated quantum circuits. Figure \ref{fig:overview} gives a schematic depiction of the basic conceptual structure of this work. 

\begin{figure}[h]
    \centering
    \begin{tikzpicture}
        \begin{scope}
            \node[anchor=north west,inner sep=0] (image_a) at (0,0)
        {\includegraphics[width=0.89\columnwidth]{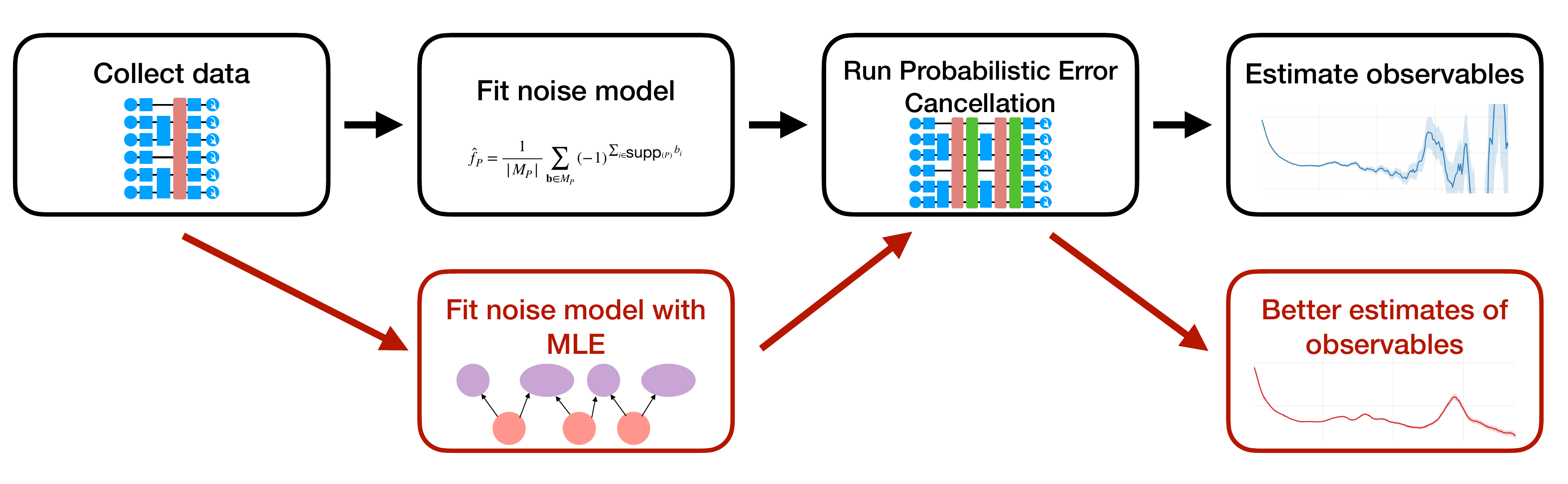}};
        \end{scope}
    \end{tikzpicture}
    \vspace*{-0.4cm}
    \caption{Schematic illustration of our work. We give an algorithm that allows MLE to be used for channel learning. This gives better learning of noise models than previous approaches. When used for PEC on a quantum circuit, the resulting noise model also gives markedly better error-mitigated outputs. 
    }
    \label{fig:overview}
\end{figure}

\subsection{Summary of main results}
PEC requires a precise model of the error channel. In practice, learning the noise model constitutes a large portion of the total overhead associated with PEC. We show that maximum likelihood estimation is meaningfully more sample-efficient than previous approaches to learning Pauli-Lindblad channels. Under realistic conditions, we observe roughly a threefold reduction in sample complexity for learning from the exact same data. Furthermore, as shown in Figure \ref{fig:mle_vs_epf_pec}, this difference translates into large improvements in error-mitigated quantum simulations. 

\begin{figure}[h]
    \begin{tikzpicture}
        \begin{scope}
            \node[anchor=north west,inner sep=0] (image_a) at (0,0)
        {\includegraphics[width=0.8\columnwidth]{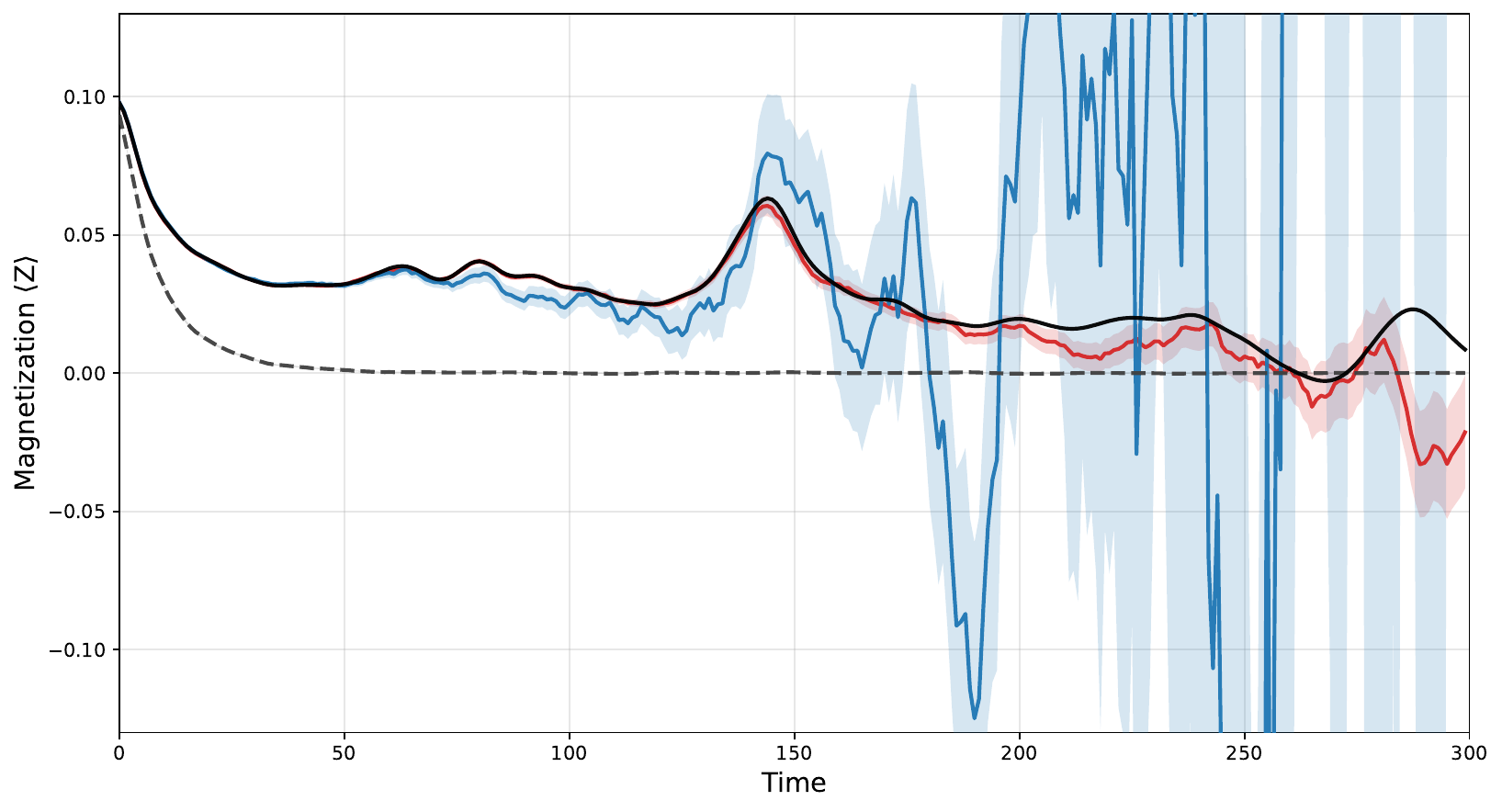}};
        \end{scope}
    \end{tikzpicture}
    \vspace*{-0.4cm}
    \caption{Average magnetization over time in a simulation of a Trotterized transverse-field Ising model circuit on 10 qubits. The noise channel is a 1D 2-local Pauli-Lindblad channel with random coefficients sampled uniformly from $(0,10^{-3})$. PEC is done using channel estimates learned with either MLE or EPF on a training set of 999,999 samples. Shaded regions show standard error. See Section \ref{sec:demo} for further details.
    }
    \label{fig:mle_vs_epf_pec}
\end{figure}

The usual difficulty of MLE is the classical computational cost. A naive algorithm requires exponential runtime. We give an efficient method for evaluating the likelihood function for a 1D-local Pauli-Lindblad channel with data generated by preparing product states and making product measurements. One can obtain a polynomial-time algorithm for this setting using tensor networks, but we find significant practical advantages by instead mapping the quantum circuit to a Bayesian network\cite{dambrosio_inference_1999, xing_probabilistic_nodate}. The likelihood function is then evaluated by belief propagation\cite{pearl_reverend_1982}.

For practical purposes, the main limitation of our approach is the restriction to 1D geometry. We suggest some possible strategies for approximating the likelihood function in higher dimensions, but it is not yet clear how the resulting approximation error will affect convergence. Our algorithm does also work when the channel is attached to a layer of local gates rather than being accessible as a standalone Pauli channel. This setting is especially relevant to PEC.

Because noise rates on quantum hardware are usually quite low, it is useful to amplify the noise by repeating the gates and channel several times before each measurement. So long as the gates are all self-inverse Clifford gates (e.g., CZ gates), our algorithm can be extended to this case. This type of gate is often responsible for a large proportion of the total noise in a quantum computation. In addition, given access to data collected at multiple circuit depths, simultaneous learning of state preparation and measurement (SPAM) error rates and gate-based error rates is possible.

\section{Theoretical background}
Our goal is to learn a quantum channel from experimental data. In the error mitigation context, the most important accuracy measure is the typical bias of whichever error-mitigated observable is of interest. This depends heavily on the details of the problem in question, and in general it is difficult to compute. We will prioritize a proxy measure instead. In this work we generally choose the mean squared difference between learned and true $w_i$ as our measure of success. We expect that our results are not particularly sensitive to the choice of metric. 

A generic quantum channel on $n$ qubits has $O(16^n)$ parameters, so some assumptions about the structure of the channel are necessary. In the error mitigation context, Pauli twirling allows us to reduce arbitrary channels to Pauli channels\cite{hashim_randomized_2021, wallman_noise_2016, geller_efficient_2013}, which can be written in the form
 \begin{gather}
     \Phi(\rho) = \sum_{i} p_i P_i \rho P_i
 \end{gather}
 where the $P_i$ are Pauli strings and the $p_i$ form a probability distribution over errors. This brings the number of parameters down to $O(4^n)$. 

In a physical device, it is natural to assume that the error channel involves a number of statistically independent events, each of which occurs with some characteristic rate. Each event induces a different Pauli error. This assumption corresponds to writing the channel in Pauli-Lindblad form, 
 \begin{gather}
    \label{eq:pl_factorized}
     \mathcal{N}(\rho) = \prod_i \mathcal{N}_{P_i, w_i} 
\end{gather}
with
\begin{gather}
\mathcal{N}_{P_i,w_i}(\rho) = (1 - w_i)\rho + w_i P_i\rho P_i
\end{gather}
$w_i$ is interpreted as the probability of error $P_i$ occurring in one timestep, here meaning one layer of the circuit. The terms in the product in Equation~\eqref{eq:pl_factorized} commute, so their order doesn't matter. We may equivalently expand Equation~\eqref{eq:pl_factorized} as 
\begin{gather}
      \mathcal{N}(\rho) = \sum_{S \subseteq \{1...m\}} \left(\prod_{i \in S} w_i\right) \left(\prod_{i \notin S} (1 - w_i)\right)
    \left(\prod_{i \in S} P_i\right) \rho \left(\prod_{i \in S} P_i\right)^\dagger
 \end{gather}
 where the ordering of the products is again irrelevant, except that the two copies of $\prod_{i \in S} P_i$ must share the same ordering.

 It is further useful to assume that most errors affect only a small number of adjacent qubits. In other words, the support of $P_i$ is small and geometrically local. This is not entirely true, since real devices occasionally experience events affecting many qubits\cite{seif_suppressing_2024, li_cosmic-ray-induced_2025, binney_distinguishing_2026}, but it is an assumption that accounts for most of the noise in practice\cite{acharya_quantum_2025}. With this assumption, the number of parameters which must be learned is now only polynomial in the number of qubits. 

  For simplicity, we will focus on the case in which the noise is 2-local and both the circuit and the noise are geometrically 1D. Parameter count then scales linearly with qubit count. We will furthermore mostly assume that the circuit which produces the noise involves only $CZ$ gates. We will later discuss how these constraints may be relaxed.

\subsection{Data collection}
Data is collected by preparing states, applying the noisy gates, and then measuring in an appropriately selected basis. We focus on the case where the prepared states and measurement basis have a particular simple structure. We prepare on each site an eigenstate of a Pauli operator, and subsequently measure that same Pauli operator on that site. This is illustrated in Figure \ref{fig:schematic_setup}, where $R_i$ represents a rotation to a Pauli eigenbasis on site $i$. 

The remaining question is which product states to prepare and in what basis to measure them. For the context we choose as our main example, 1D 2-local Pauli-Lindblad channels with the idling circuit, Ref~\cite{van_den_berg_probabilistic_2023} shows there exists a set of 9 distinct settings of the $R_i$ which suffice for learnability (Theorem SIV.4). This is the strategy used in the bulk of this paper. We emulate quantum data collection using the Qaravan python package\cite{alam-faisal_alam-faisalqaravan_2026}.

Our algorithm can also be applied to somewhat more general experiments. We assume for now that it is possible to prepare product states and measure in a product basis without errors. Section \ref{sec:measurement_error} shows how this assumption can be removed. Section \ref{sec:cycling} shows how to generalize to the case in which the gates and associated noise channel are applied many times rather than once. One can also generalize to other state preparations, e.g., non-Pauli rotations. In principle one may be able to improve sample complexity by preparing states and making measurements which are entangled across multiple runs of the circuit\cite{chen_optimal_2024, chen_instance-optimal_2026}. This is challenging on current hardware\cite{chen_when_2023}, however, and is not considered here.

\begin{figure}[h]
    \centering
    \begin{tikzpicture}
        \begin{scope}
            \node[anchor=north west,inner sep=0] (image_a) at (0,0)
            {\includegraphics[width=0.4\columnwidth]{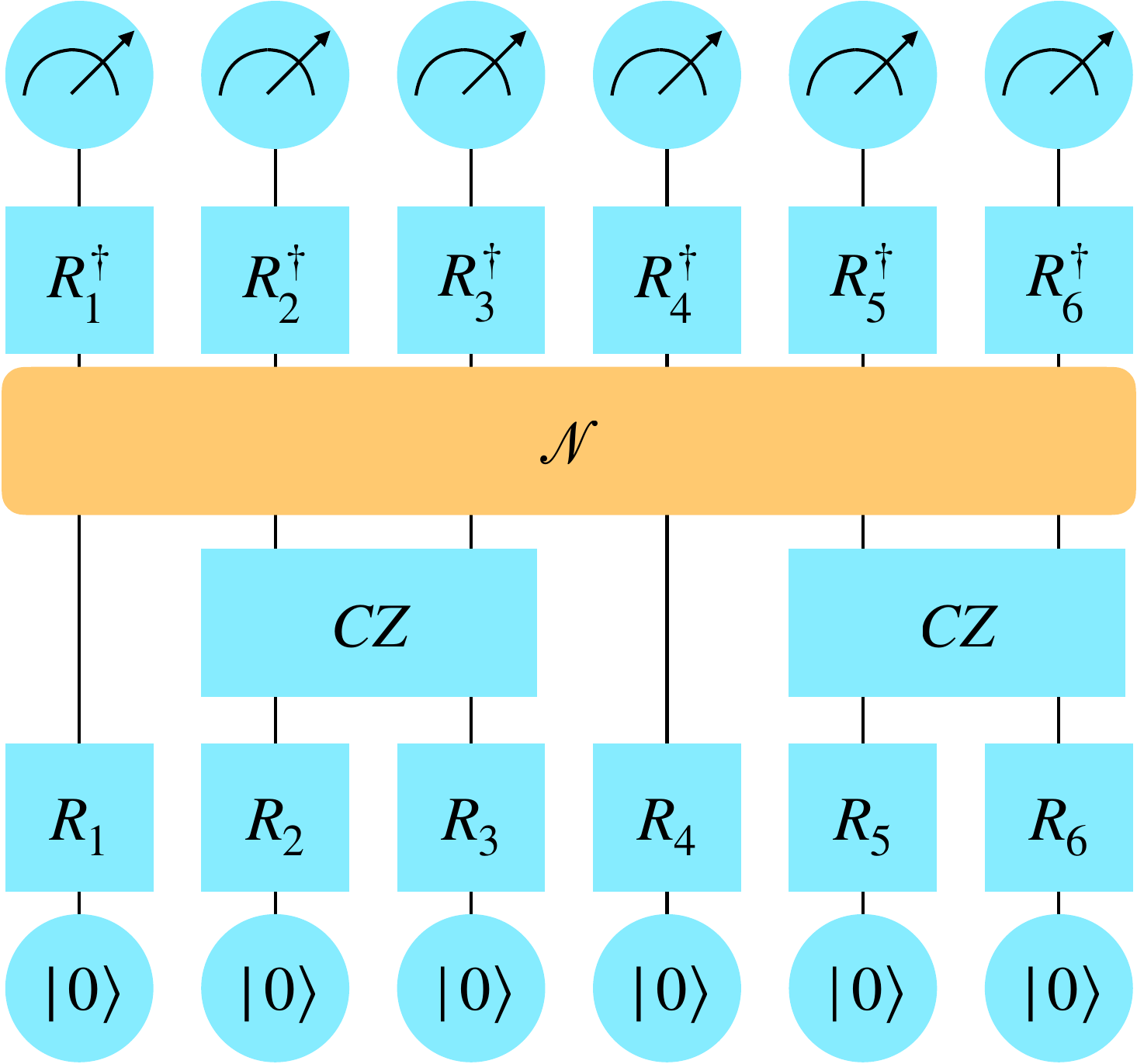}};
        \end{scope}
    \end{tikzpicture}
    \vspace*{-0.4cm}
    \caption{Schematic illustrating a typical data collection setup for learning gate-based Pauli errors. A layer of $CZ$ gates induces noise $\mathcal{N}$. We assume noiseless single-qubit rotations $R_i$ can be used to prepare states and rotate to a desired measurement basis. 
    }
    \label{fig:schematic_setup}
\end{figure}

\subsection{Empirical Pauli Fidelities}
Ref.~\onlinecite{erhard_characterizing_2019} introduces an approach to tomography which we term Empirical Pauli Fidelities (EPF). The strategy is to choose a particular parameterization of the space of channels such that there exists an unbiased estimator for the parameters. The central limit theorem then guarantees a consistent estimator with typical error $\sim N_\text{samples}^{-1/2}$\cite{flammia_efficient_2020}. It turns out that the Pauli fidelities \(f_{P} = \tr\left( P \mathcal{N}\left[P\right]\right)\) are such a parameterization. Here $\mathcal{N}$ is the channel and $P$ is any Pauli string. The unbiased estimator is given by 
\begin{gather}
    \hat{f}_{P} = \frac{1}{|M_{P}|}\sum_{\mathbf b \in M_{P}} (-1)^{\sum_{i \in \text{supp}(P)} b_i}
\end{gather}
where  $M_P$ is the set of measurements in bases containing $P$ as a substring and $\text{supp}(P)$ is the support of $P$. See Ref~\cite{flammia_efficient_2020} for a detailed discussion of this algorithm.

These algorithms are very computationally efficient. Furthermore, the resulting estimate is unbiased with respect to some parameterization. However, we will show that their sample complexity is not optimal.

It is clear that these algorithms proceed by taking the Empirical Pauli Fidelities as a summary of the data, then trying to fit the best channel given that summary statistic. An immediate question raised is whether the EPFs are a ``good'' summary, i.e., whether they capture the important information in the data. In Appendix \ref{app:insufficient}, we show that they do not. In particular, we construct a simple example in which the EPF summary statistic can be seen to discard important aspects of the raw data. It follows that the EPFs are not a sufficient statistic; they do not capture all information about the likelihood function. One thus expects to obtain better estimates of the true channel parameters using estimators that instead exploit \textit{all} of the relevant information in the data.

\subsection{Maximum Likelihood Estimation}
The standard approach to minimizing the sample complexity of tomography is known as Maximum Likelihood Estimation (MLE). MLE is \textit{asymptotically efficient}, which is to say the variance of the estimate is the best possible in the limit of large sample size. This guarantees that MLE will give more accurate estimates than EPF when enough data is available. However, computational evaluation of the MLE estimator is not always straightforward. 

Suppose we prepare a quantum state $\rho_i$ and pass it through a quantum circuit $C$ which induces an unknown noise channel $\Phi$. We then make a projective measurement and observe outcome $O_f$. The corresponding likelihood function is
\begin{gather}
    \mathcal{L}( \Phi | O_f) = \text{tr}\left(O_f \Phi\left( C \rho_iC^\dagger\right)\right)
\end{gather}
Likelihood for a larger dataset of several observations is the product of the single-sample likelihoods. For a generic channel,  $\Phi(U \rho_iU^\dagger)$ requires exponential space to represent, so this formula is not practical to evaluate. And even if we could evaluate the likelihood function, the space of possible channels is still too large to have any hope of successfully optimizing. 

For Pauli channels, however, things are much easier. In our case $\rho_i = R \ket{\mathbf{0}}\bra{\mathbf{0}}R^\dagger$ and $O_f = R \ket{\mathbf{b}}\bra{\mathbf{b}}R^\dagger$, where $R$ is some Pauli basis rotation and $\mathbf{b}$ is the bitstring obtained after measurement. The likelihood function then simplifies to 
\begin{gather}
    \mathcal{L}( p_1...p_m | \mathbf{b}) = \sum_{i \in \text{errors}} p_i \left|\bra{\mathbf{b}} R^\dagger P_i C R \ket{\mathbf{0}}\right|^2
\end{gather}
This sum is computationally efficient to evaluate so long as the number of terms in the channel is not too large, i.e., when the Pauli channel is sparse. 

We may also consider a sparse Pauli-Lindblad channel. Although a Pauli-Lindblad channel is a special kind of Pauli channel, a \textit{sparse} Pauli-Lindblad channel is not a \textit{sparse} Pauli channel. This is because the corresponding Pauli channel has one term for each possible combination of generators, so that a small number of Lindblad generators produces a large number of possible Pauli errors.

For the Pauli-Lindblad channel, we may write the likelihood function as 
\begin{gather}
    \label{eq:spl_likfun}
    \mathcal{L}(w_1...w_m | \mathbf{b}) = \sum_{S \subseteq \{1...m\}} \left(\prod_{i \in S} w_i\right) \left(\prod_{i \notin S} (1 - w_i)\right)
    \times
    \left|\bra{\mathbf{b}} R^\dagger \left(\prod_{i \in S} P_i\right) C R \ket{\mathbf{0}}\right|^2
\end{gather}
This is a sum over all possible \textit{sets} of basis errors which may co-occur. Even when the number of Lindblad generators is polynomial in $n$, this sum still has an exponential number of terms, so it is difficult to evaluate directly. However, the dimension of the space of channels in this case is small, so the optimization itself is not hopeless. We will later present some tools which make it tractable to evaluate the sum.

\section{Algorithms}

\subsection{Sparse Pauli channel}
This case is quite straightforward. Suppose our possible measurement settings are labeled by basis rotations $R_j$. The probability of error string $P_i$ occurring is $w_i$, and the likelihood of measurement outcome $\mathbf{b}_j$ conditional on error $P_i$ is $\left|\bra{\mathbf{b}_j} R_j^\dagger P_i C R_j \ket{\mathbf{0}}\right|^2$, so the likelihood function is 
\begin{gather}
    \mathcal{L}( w_1...w_m) = \prod_{j \in \text{data}} \sum_{i \in \text{errors}} w_i \left|\bra{\mathbf{b}_j} R_j^\dagger P_i C R_j \ket{\mathbf{0}}\right|^2
\end{gather}
Since the circuit $C$ is a tensor product of independent gates, each term in this sum can be evaluated easily, and the total number of terms is small. One can then use any optimization strategy to look for the maximum.

\subsection{Tensor network for Pauli-Lindblad channel}
 Because the errors $P_i$ are local, each $\mathcal{N}_{P_i, w_i}$ can be represented by a small tensor. A sparse Pauli-Lindblad channel is thus a product of small tensors. The Pauli basis rotations $R$ are single-site operators, the initial state is a product state, and the final measurement is in a product basis, and so the whole likelihood of each measurement can be represented efficiently by a single tensor network (Figure \ref{fig:network_reductions}a). When the geometry is 1D, this network can be contracted efficiently and the likelihood function can be evaluated in polynomial time. 

Ref.~\onlinecite{cao_differentiable_2026} uses tensor networks in a QEC setting to learn a noise model from syndromes. In that case the geometry is not 1D, which limits scalability. Even in the 1D case, we found the constant-factor overhead of tensor network MLE is problematic. We will now discuss an alternative strategy with lower overhead and potentially better generalization beyond 1D.

\subsection{Bayesian network for Pauli-Lindblad channel}
We will now give a reduction from the more general tensor network representation of the likelihood function to a Bayesian network. In other words, we reinterpret this quantum problem as describing in effect the evolution of a classical probability distribution over computational basis states. In the 1D case, the asymptotic computational complexity of evaluation of the Bayesian network is the same as that of the tensor network, but the constants seem to be much better. In addition, the Bayesian network provides a conceptual picture which is useful for generalizing to non-1D systems.

The first observation is that we may rewrite 
\begin{align}
    \bra{\mathbf{b}} R^\dagger \left(\prod_{i} P_i\right) C R
    \ket{\mathbf{0}} 
    &= \bra{\mathbf{b}} \left(\prod_{i} R^\dagger P_i R \right) R^\dagger C R
    \ket{\mathbf{0}} \\
    &= \bra{\mathbf{b}} \left(\prod_{i} P_i'\right) C'
    \ket{\mathbf{0}}
\end{align}
Here $R$ is a Clifford gate, so $P_i' \equiv R^\dagger P_i R$ remains some Pauli string of small support. $C' \equiv R^\dagger C R$ remains a single layer of two-site gates, but the gates may no longer be CZs. This argument carries through to the whole error channel, as illustrated in Figure \ref{fig:network_reductions}b.

\begin{figure*}[h]
    \centering
    \begin{tikzpicture}
        \begin{scope}
            \node[anchor=north west,inner sep=0] (image_a) at (0,0)
            {\includegraphics[width=0.95\columnwidth]{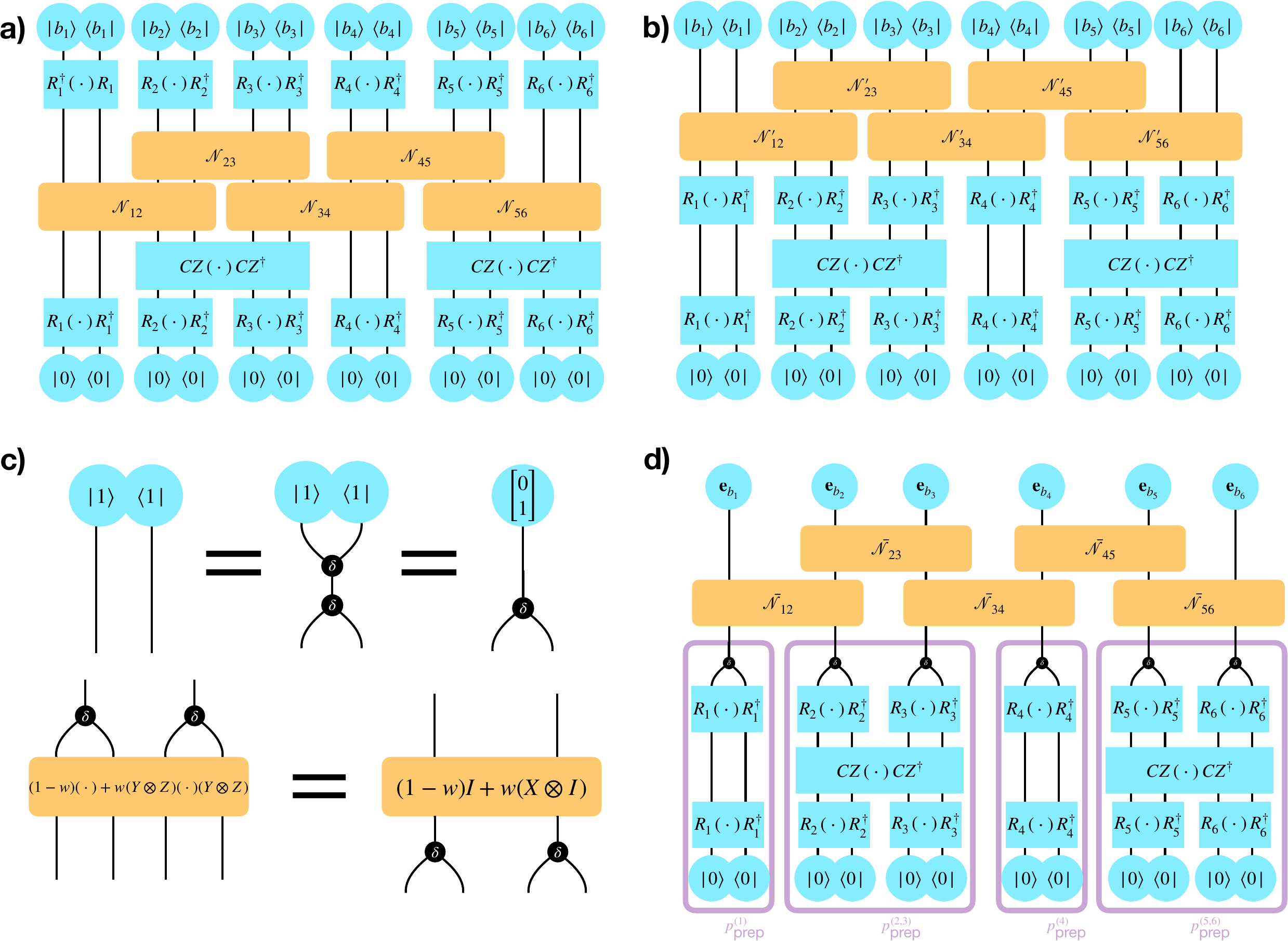}};
        \end{scope}
    \end{tikzpicture}
    \vspace*{-0.4cm}
    \caption{(a) Original tensor network. (b) After commuting the basis rotations through the errors. (c) Reduction rules map quantum states to classical probability distributions. (d) Error channel as evolution of a classical distribution.}
    \label{fig:network_reductions}
\end{figure*}

The second observation is that in the computational basis, 
\begin{gather}
    \left(\ket{0}\bra{0}\right)_{ij} = \sum_{k} \delta_{ijk} \left(
    \begin{bmatrix}
        1 \\
        0
    \end{bmatrix}
    \right)_k
\end{gather}
where 
\begin{gather*}
    \delta_{ijk} = \begin{cases}
        1 & i = j = k \\
        0 & \text{otherwise}
    \end{cases}
\end{gather*}
More generally
\begin{gather}
    \left(\ket{b}\bra{b}\right)_{ij} = \sum_{k} \left(
    \mathbf{e}_{b}
    \right)_k \delta_{ijk}
\end{gather}
where $(\mathbf{e}_{b})_k = \delta_{kb}$ represents a standard basis vector. 
This rule is illustrated in Figure \ref{fig:network_reductions}c. The three-point $\delta_{ijk}$ can be interpreted as the decoherence operation taking a quantum density operator to a classical probability distribution over bit strings. For the action of Pauli errors, similarly, we have
\begin{align}
        \sum_{ij}  \delta_{ijk} (Z \rho Z^\dagger)_{ij} &= \sum_{ij}  \delta_{kij} (I \rho I^\dagger)_{ij} = \sum_{ij\ell}  I_{k\ell} \delta_{\ell ij} \rho_{ij}
        \\
        \sum_{ij}  \delta_{ijk} (X \rho X^\dagger)_{ij} &= \sum_{ij}  \delta_{ijk} (Y \rho Y^\dagger)_{ij} = \sum_{ij\ell} X_{k \ell} \delta_{\ell ij} \rho_{ij}
\end{align}
In other words, quantum $X$ and $Y$ operators correspond to classical bit flip ($X$) operations, while quantum $I$ and $Z$ have no effect on the classical distribution. We can extend this argument to Pauli channels as follows: Given a Pauli string P, define $\bar{P}$ to be a Pauli string with $I$ where $P$ has $I$ or $Z$ and with $X$ where $P$ has $X$ or $Y$. Then we may write
\begin{gather}
    \sum_{\vec{i},\vec{j}} \delta_{k_1 i_1 j_1}\delta_{k_2 i_2 j_2} ... \delta_{k_n i_n j_n} \left[(1 - w)\rho + w P \rho P^\dagger\right]_{\vec{i},\vec{j}} 
    =
    \sum_{\vec{\ell}, \vec{i}, \vec{j}} \left[(1 - w)I^{\otimes n} + w\bar{P}\right]_{\vec{k}, \vec{\ell}} \delta_{\ell_1 i_1 j_1} \delta_{\ell_2 i_2 j_2} ... \delta_{\ell_n i_n j_n}
    \rho_{\vec{i},\vec{j}}
\end{gather}
or more compactly
\begin{gather}
   \delta^{\otimes n} \left((1 - w)\rho + w P \rho P^\dagger\right)
    =
    \left((1 - w)I^{\otimes n} + w\bar{P}\right) \delta^{\otimes n} \rho
\end{gather}
This is illustrated in Figure \ref{fig:network_reductions}c. More generally, given a Pauli channel 
\begin{gather}
    \mathcal{N} = \prod_i \left((1 - w_i)(\cdot) + w_iP_i (\cdot) P_i\right)
\end{gather}
we define the classical bitflip channel
\begin{gather}
    \label{eq:bitflip_channel}
    \bar{\mathcal{N}} = \prod_i \left((1 - w_i)I + w_i\bar{P}_i\right)
\end{gather}
where this latter product is matrix multiplication rather than channel composition. 

We may apply these rules to reduce the quantum tensor network to a classical tensor network describing the preparation and evolution of a probability distribution, as illustrated in Figure \ref{fig:network_reductions}d. The circuit $C'$ prepares some probability distribution over bitstrings $p_\text{prep}(\mathbf{b})$. The errors then evolve $p_\text{prep}$ according to $\bar{\mathcal{N}}'$. In order to obtain the likelihood of a particular measurement outcome, we compute the probability of the observed bitstring under this distribution. 

\subsubsection{\texorpdfstring{Structure of $p_\text{prep}$}{Structure of pprep}}
\label{sec:pprep}
This probability distribution still requires $2^n$ real numbers to specify, so we haven't improved computational tractability very much yet. We now study the structure of $p_\text{prep}$. Recall that $C$ is a single layer of CZ gates. Consider first a qubit that is not acted on by any of the gates. This qubit is prepared in a state $R^\dagger R \ket{0} = \ket{0}$, so the classical measurement outcome is always $0$. 

Consider now a pair of qubits acted on by a CZ, with rotations $R_1 \otimes R_2$. There are nine possible choices of two-qubit Pauli basis rotation. It turns out when either of the basis rotations is $I$, the prepared state is simply $\ket{00}$, and so the classical measurement outcomes are both always $0$. When both basis rotations are nontrivial, on the other hand, the classical measurement outcomes are two independent $\text{Bernoulli}\left(\frac{1}{2}\right)$ random variables. In the latter case nothing can be learned from measurements of those two bits; the outcome distribution will not depend on the parameters of the noise model at all. Such a basis should thus be avoided when taking data. 

We see that, no matter which Pauli basis rotations we choose, $p_\text{prep}$ is a product distribution
\begin{gather}
    p_\text{prep}(\mathbf{b}) = \prod_{i \in \text{qubits}} p_\text{prep}^{(i)}(b_i)
\end{gather}
Correlations between bits are introduced only by the errors. In order to understand the structure of these correlations, consider a set of Bernoulli random variables 
\begin{gather}
    \ell_{i} = \begin{cases}
        0 & \text{with probability } 1 - w_i \\
        1 & \text{with probability } w_i
    \end{cases}
\end{gather}
Then we can write $(1 - w_i)I^{\otimes n} + w_i\bar{P}_i = \mathbb{E} \left[\bar{P}_i^{\ell_i}\right]$
and the channel of equation~\eqref{eq:bitflip_channel} becomes
\begin{gather}
    \bar{\mathcal{N}} = \mathbb{E}_{\vec{\ell}}\left[\prod_i \bar{P}_i^{\ell_i}\right]
\end{gather}
We can interpret these latent variables $\ell_i$ as indicating which errors actually occur in any given circuit run.

Before any errors have occurred, the prepared distribution $p_{\text{prep}}$ is independent over the bits. After the errors, the final distribution is \textit{conditionally independent}, conditioning on which errors occurred, i.e.,
\begin{gather}
\label{eq:full_bayesian_network}
    p_{\text{out}}(\mathbf{b}) = \mathbb{E}\left[\prod_{i \in \text{qubits}} p_{\text{out}}^{(i)}\left(b_i | \ell_1 ... \ell_m\right)\right]
\end{gather}
In fact, the distribution of any given bit does not depend on most of the latent variables. $\ell_i$ affects bit $j$ only if $\bar{P}_i^{(j)} = X$.
\begin{gather}
    p_{\text{out}}^{(i)}\left(b_i | \ell_1 ... \ell_m\right) = p_{\text{out}}^{(i)}\left(b_i | \left\{\ell_j : i\ \in \text{supp}(\bar{P}_j)\right\}\right)
\end{gather}
This dependence structure can be visualized as a Bayesian network, as illustrated in Figure \ref{fig:bayesian_networks}b. We will give an explicit form for the conditional distributions in Equations~\eqref{eq:conditional_distribution_factors} and~\eqref{eq:full_conditional_distribution}. 

\begin{figure*}[h]
    \centering
    \begin{tikzpicture}
        \begin{scope}
            \node[anchor=north west,inner sep=0] (image_a) at (0,0)
            {\includegraphics[width=\columnwidth]{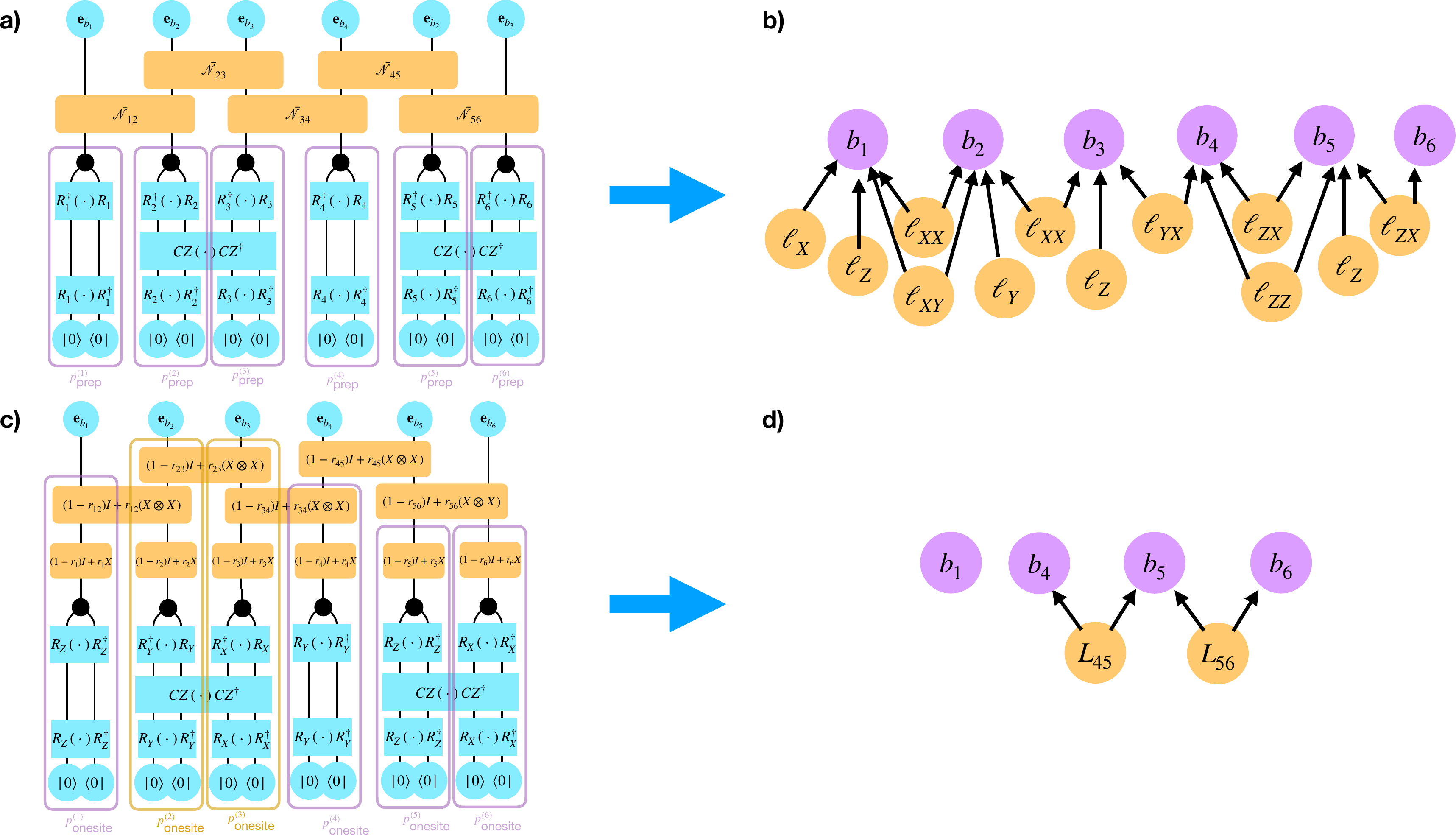}};
        \end{scope}
    \end{tikzpicture}
    \vspace*{-0.4cm}
    \caption{(a) Tensor network describing classical probability distribution preparation and measurement. (b) Bayesian network representation of the same distribution. (c) Simplified tensor network with single-site errors separated out and uninformative bits outlined in yellow. (d) Corresponding reduced Bayesian network.
    }
    \label{fig:bayesian_networks}
\end{figure*}

\subsubsection{Simplifying the network}
Equation~\eqref{eq:full_bayesian_network} describes a Bayesian network. However, in general such a network can't be efficiently evaluated, since $\mathbb{E}_{\vec{\ell}}$ involves a sum over all $2^m$ possible configurations of the latent variables. We will now show that this description of the network has a great deal of redundancy. 

\paragraph{Uninformative bits} Suppose a bit is prepared in an equal mixture of $0$ and $1$, i.e., $p_{\text{prep}}^{(i)}(b) = \frac{1}{2}$. Suppose a $CZ$ gate acts on two sites for which $R_1, R_2$ are both nontrivial. Then $p_{\text{prep}}(b) = \frac{1}{2}$ for both bits, i.e., they are each in an equal mixture of $0$ and $1$. It follows that these bits will be independent of the others at all later times. No matter what latent variables act, they will end up in a distribution $p_{\text{out}}(b) = \frac{1}{2}$. These bits are termed ``uninformative'', since measurement outcomes on them carry no information at all about the structure of the channel. We may drop them from our network entirely. Latent variables that affect only uninformative bits are likewise dropped. The surviving bits all satisfy $p_\text{prep}^{(i)}(b) = \delta_{b0}$. 

\paragraph{Consolidation of error strings} 
Equation~\eqref{eq:full_bayesian_network} introduces one latent variable $\ell_i$ for each Lindblad generator $P_i$. However, the mapping from $P_i$ to $\bar{P}_i$ is not injective, so sometimes distinct Lindblad generators produce the same classical bit flips. Let us instead label classical errors by the subset of bits which are flipped, so e.g 
$\bar{P}_{\{2,4\}} = IXIXII$. For any subset $s$, the aggregate error probability is then given by
\begin{gather}
    r_s \equiv \frac{1}{2}\left(1 - \prod_{\{i:\bar{P}_i = \bar{P}_s\}} (1 - 2 w_i)\right)
\end{gather}
since each $\bar{P}$ is self-inverse. Call the corresponding latent variables $L_s$. This typically reduces the number of latent variables needed dramatically.

\paragraph{Single-qubit errors} It is useful to absorb the single-qubit errors into the state preparation, since this reduces the total number of nodes in the network. More precisely, define
\begin{gather}
    p_{\text{onesite}}^{(i)}(b_i) = \mathbb{E}_{L_{\{i\}}} \left[p_{\text{prep}}^{(i)}\left(L_{\{i\}} \oplus b_i\right)\right]
\end{gather}
or equivalently, since we've already dropped uninformative bits, 
\begin{gather}
    p_{\text{onesite}}^{(i)}(b_i) = \begin{cases} 
    1 - r_{\{i\}} & b_i = 0 \\
    r_{\{i\}} & b_i = 1
    \end{cases}
\end{gather}
Figure \ref{fig:bayesian_networks}d illustrates the effect of these simplifications.

\subsubsection{Explicit form of conditional distribution}
We know that the conditional distribution decomposes over qubits. We may write explicitly 
\begin{gather}
    \label{eq:conditional_distribution_factors}
    p_{\text{out}}^{(i)}\left(b_i | \mathbf{L}^{(i)}\right) = p_\text{prep}^{(i)}\left[\left(\bigoplus_{s : i \in s} \mathbf{L}_s \right) \oplus b_i\right]
\end{gather}
where $\oplus$ is the elementwise XOR. In other words, the conditional distribution of $b_i$ depends only on whether the number of active latent variables touching site $i$ is odd or even. 

If we instead absorb the single-qubit errors, we obtain
\begin{gather}
    \label{eq:full_conditional_distribution}
    p_{\text{out}}(\mathbf{b}) = \mathbb{E}_\mathbf{L}\left[\prod_{i \in \text{qubits}} \begin{cases} 
    1 - r_{\{i\}} & \left(\bigoplus_{s:i \in s \land |s| > 1} L_s\right) \oplus b_i = 0 \\
    r_{\{i\}} & \left(\bigoplus_{s:i \in s \land |s| > 1} L_s\right) \oplus b_i= 1
    \end{cases}\right]
\end{gather}
These conditional distributions are what is needed to construct the Bayesian network. 

\subsubsection{Optimization and convergence}
We can use belief propagation to evaluate the likelihood function at a single point\cite{yedidia_understanding_2003}. Optimization is made more efficient by access to gradients of the likelihood function. We use the jax python library to obtain gradients and the scipy implementation of L-BFGS-B to do our optimization\cite{virtanen_scipy_2020, liu_limited_1989}. Although the likelihood function is not convex, empirically the optimization does not seem to be difficult. There is a vanishing gradient problem near $w_i = 0.5$, but this is easily avoided by initializing all parameters to small values. Furthermore, we find that the estimator approaches the Cram\'{e}r-Rao bound for sufficiently large sample size. This suggests that the optimization successfully locates the global minimum. 

\section{Results}
\subsection{MLE improves sample complexity}
\begin{figure}[h]
    \centering
    \begin{tikzpicture}
        \begin{scope}
            \node[anchor=north west,inner sep=0] (image_a) at (0,0)
            {\includegraphics[width=0.4\columnwidth]{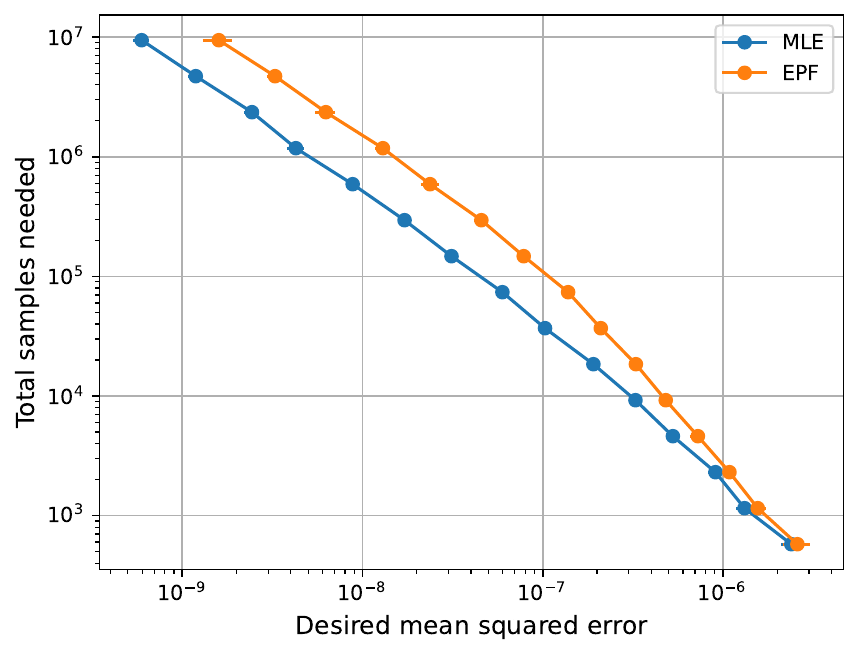}};
            \node [anchor=north west] (note) at (-0.5,0) {\small{\textbf{a)}}};
        \end{scope}
    \end{tikzpicture}
    \begin{tikzpicture}
        \begin{scope}
            \node[anchor=north west,inner sep=0] (image_b) at (0,0)
            {\includegraphics[width=0.43\columnwidth]{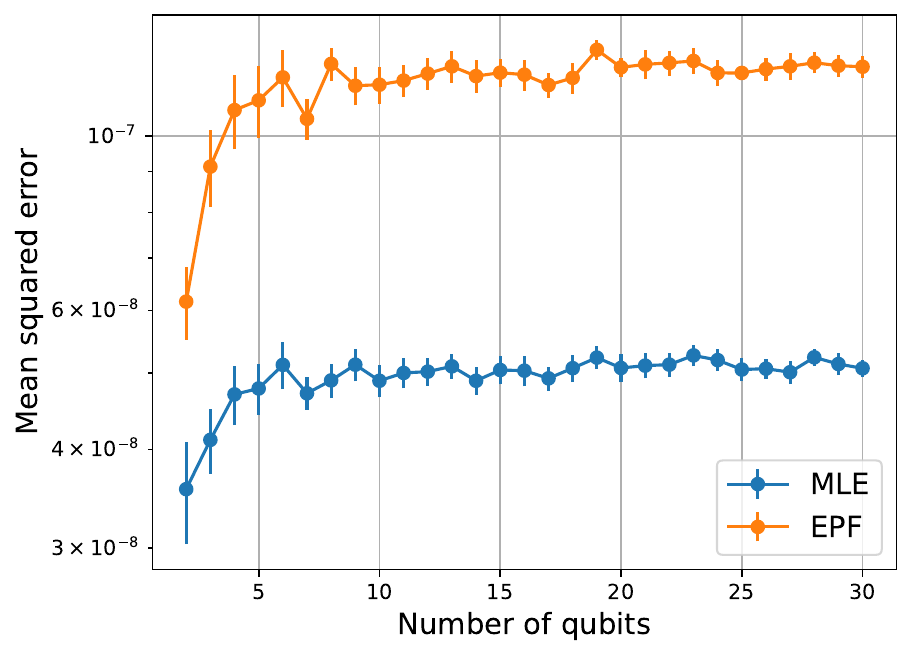}};
            \node [anchor=north west] (note) at (-0.5,0) {\small{\textbf{b)}}};
        \end{scope}
    \end{tikzpicture}
    \vspace*{-0.4cm}
    \caption{
    (a) MLE reaches the same accuracy as EPF with roughly one third the sample size. Here the channel is a 10-qubit 1D-local sparse Pauli-Lindblad channel. $w_i$ are sampled randomly from the uniform distribution on $(0,10^{-3})$. The vertical axis shows mean squared error per parameter, averaged both over the 111 parameters and over many rounds of simulated learning. Error bars indicate standard errors. (b) The sample complexity appears to be nearly constant as system size increases. Here we use 99,999 samples (11,111 for each of the 9 measurement settings). We observe nearly constant per-parameter accuracy. Each point is an average over many channels, each sampled from the same distribution as (a).
    }
    \label{fig:mle_vs_epf_pl}
\end{figure}
Figure \ref{fig:mle_vs_epf_pl} shows comparisons between channel learning with MLE and EPF. Simulated data is gathered according to the model in Figure \ref{fig:schematic_setup}. We measure accuracy by the mean squared difference in our learned estimate of the probability of each generator and the generator's true probability, averaged both over generators and over many trials:
\begin{gather}
    MSE =\left \langle\frac{\sum_i \left(w_i^{\text{estimate}} - w_i^{\text{true}}\right)^2}{\text{\# generators}}\right \rangle_{\text{many trials}}
\end{gather}
Panel (a) shows the number of samples needed to reach a desired accuracy for a particular 10-qubit 1D-local sparse Pauli-Lindblad channel. The channel was generated randomly, with each $w_i$ chosen from the uniform distribution between $0$ and $0.001$. We see that EPF requires approximately three times the sample count in the high-precision regime, with larger advantages in the low-precision (small-sample-size) regime. Both estimators are asymptotically normal, with mean squared error $\sim \frac{1}{\text{sample size}}$. 

Panel (b) shows how accuracy depends on system size. In this panel, errors are averaged over many random channels generated by the same procedure described above. We see that the accuracy is nearly independent of the system size at fixed sample counts (99,999 samples in this case). The advantage of MLE over EPF seems to be independent of the circuit width. This is as expected, since local errors are learned mostly from local data. The problem of learning a channel on a very large system can thus be decomposed into many nearly-independent local subproblems.

\subsection{Improved sample complexity results in meaningfully improved PEC}
\label{sec:demo}
We have seen that MLE gives better sample complexity than previous approaches. We now see that this corresponds to better convergence in a realistic error-mitigated circuit. Figure \ref{fig:mle_vs_epf_pec} showed the simulation of a noisy quantum circuit with PEC using a learned channel. The circuit implements a Trotter simulation of the transverse-field Ising model on 10 qubits. We generate a 1D 2-local Pauli-Lindblad channel with random coefficients distributed uniformly between $0$ and $10^{-3}$. We then draw a sample of $111,111$ outcomes for each of the $9$ measurement settings. Both EPF and MLE are used to fit channels to this data, obtaining mean squared errors per parameter of $2.2 \times 10^{-8}$ and $6.0 \times 10^{-9}$, respectively. For this example we take the noise to be the same for both Trotter layers. Learning is done without any CZ gates, although as discussed in Section \ref{sec:pprep} the difference is slight. 

We plot the evolution of the expected magnetization in four different scenarios. The magnetization shown is both an average over the 10 qubits and an expected value over samples from the error-mitigated circuit. In the ideal, noiseless case, the magnetization oscillates. Unmitigated noise damps the magnetization towards zero over time. For sufficiently short times, both EPF and MLE track the physics reasonably well. At sufficiently long times, error in the learned channel overcorrects for the noise, creating growing modes which diverge substantially from the true dynamics. 

An important ingredient in this behavior is the difference between underestimating and overestimating noise rates. Underestimated parameters leave some residual unmitigated error, which tends to drive the state towards maximally mixed. Overestimated parameters, however, produce an unphysical state, which can result in magnetizations far exceeding one. Indeed, the systematic error introduced by overestimated parameters seems to grow exponentially. It is thus useful to look at the largest single overestimate. In this case it is $1.9 \times 10^{-4}$ with MLE and $3.7 \times 10^{-4}$ with EPF, so we might expect the EPF-mitigated curve to diverge from the ground truth about twice as quickly as the MLE-mitigated curve. Indeed, we see that the EPF curve differs visibly around $t = 70$, while the MLE curve remains accurate until around $t = 180$. 

\section{Extensions}
\subsection{Cycling}
\label{sec:cycling}
Single-gate error rates are generally quite low. If we run just one layer of gates before measuring, we will usually detect no errors at all. A typical data point obtained from the setup in Figure \ref{fig:schematic_setup} thus contains very little information about the error probabilities. 

This problem can be solved by applying a deep circuit before measuring. A deep circuit contains many gates, and so we expect to see many errors. A deep circuit of course takes more time to run. However, in a typical experiment, state preparation and measurement overhead significantly exceeds any one layer's contribution to runtime\cite{noauthor_workload_nodate}. Repeating our layer of gates several times before measuring can thus give large improvements in sample complexity at the cost of only a little additional time per sample\cite{flammia_efficient_2020, harper_statistical_2019}. Appendix \ref{app:fisher} shows that the optimal depth at which to measure seems to be proportional to $\frac{1}{w_i}$. 

We now show that our reduction to a Bayesian network is still useful when a layer of $CZ$ gates is repeated several times. Let $L$ be the set of pairs of sites acted on by our gates, so that the operator for the layer is  
\begin{gather}
C = \otimes_{(i,j) \in L} \text{CZ}_{(i,j)}
\end{gather}
and define the corresponding channel
\begin{gather}
\mathcal{C}(\rho) = C \rho  C
\end{gather}
$C$ is self-adjoint, so this is conjugation by $C$. Then a circuit with $d$ repetitions of our noisy layer corresponds to 
\begin{gather}
    \label{eq:cycling_expr}
    \left(\mathcal{N} \circ \mathcal{C}\right)^d
\end{gather}
where the exponent indicates functional iteration. To simplify this expression, note that $\mathcal{C}$ is self-inverse, so we may write 
\begin{align}
    \mathcal{C} \circ \mathcal{N} &= \left(\mathcal{C} \circ \mathcal{N} \circ \mathcal{C}\right) \circ \mathcal{C} \\
    &= \mathcal{N}_C \circ \mathcal{C}
\end{align}
where we define $\mathcal{N}_C \equiv \mathcal{C} \circ \mathcal{N} \circ \mathcal{C}$. From this definition we can also see
\begin{gather}
    \label{eq:commute_rule_1}
    \mathcal{C} \circ \mathcal{N}_C = \mathcal{N} \circ \mathcal{C}
\end{gather}
Consider the action of $\mathcal{C}$ on a single Lindblad term: 
\begin{align}
\left[\mathcal{C} \circ \mathcal{N}_{P,w}\right](\rho) &= C\left[(1 - w)(\rho) + wP (\rho) P \right]C \\
&= \left[(1 - w) C \rho C + w (C P C) C \rho C (C P C)^\dagger\right]
\\ 
&=  \left[\mathcal{N}_{\mathcal{C}(P),w} \circ \mathcal{C}\right] (\rho)
\end{align}
Because $C$ is a Clifford circuit, $\mathcal{C}(P)$ is some Pauli string, and so the resulting channel is a single Pauli-Lindblad term with the same rate and a different generator. The new generator may be less local than the old. In particular, when $C$ is a single layer of two-site gates, the support of $CPC^\dagger$ is up to twice as large as that of $P$. Generalizing this argument to a channel with many Lindblad terms, we see  $\mathcal{N}_C$ is a new Pauli-Lindblad channel with each generator conjugated by $C$. It follows that 
\begin{gather}
    \label{eq:commute_rule_2}
    \mathcal{N} \circ \mathcal{N}_C = \mathcal{N}_C \circ \mathcal{N}
\end{gather}
since all Pauli channels commute.

Applying Equations~\eqref{eq:commute_rule_1} and~\eqref{eq:commute_rule_2} to Expression~\eqref{eq:cycling_expr} leads eventually to the simplification
\begin{gather}
\label{eq:noise_commuted}
\left(\mathcal{N} \circ \mathcal{C}\right)^d = \begin{cases} \mathcal{N}^{\frac{d+1}{2}} {\mathcal{N}_C}^{\frac{d-1}{2}} \circ \mathcal{C} & d\text{ odd} 
\\
\mathcal{N}^{\frac{d}{2}} {\mathcal{N}_C}^{\frac{d}{2}} & d\text{ even} 
\end{cases}
\end{gather}

Equation~\eqref{eq:noise_commuted} is now a single-layer circuit (possibly trivial) followed by a Pauli-Lindblad channel, so we can map it to a Bayesian network exactly as before. This reduction is illustrated in Figure \ref{fig:cycling_reduction}. 

The relationship between the Lindblad rates and the parameters of these new channels is somewhat complicated, but straightforward to compute. The resulting Bayesian network has up to four-site correlations. This can be collapsed into a 1D network and evaluated efficiently\cite{chang_node_1989}. The maximum-likelihood framework also allows us to synthesize data from multiple cycle depths together. This is especially useful when it is necessary to learn SPAM errors. 

\begin{figure}[h]
    \centering
    \includegraphics[width=\linewidth]{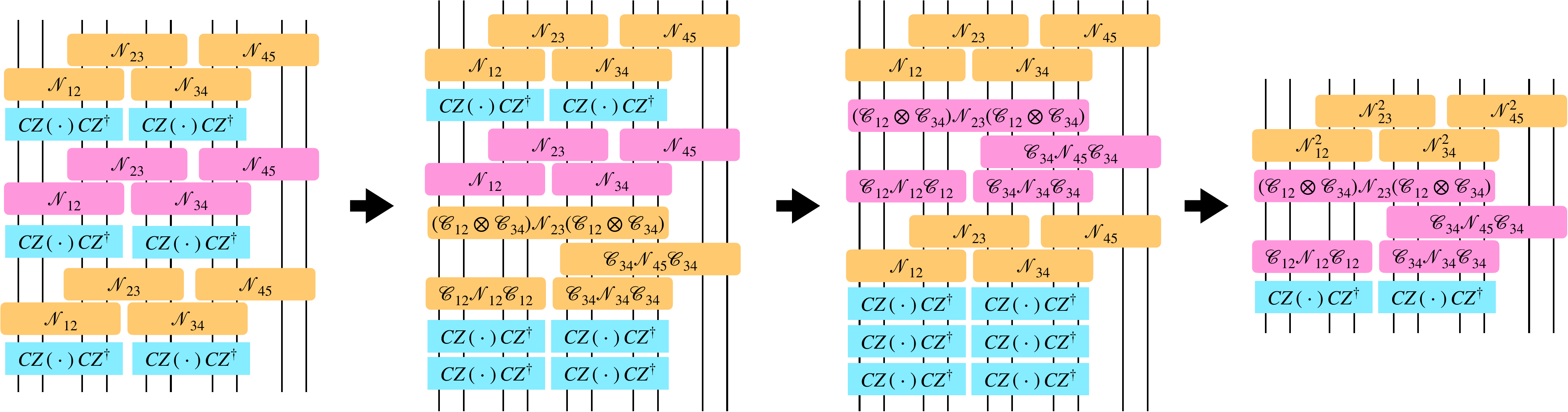}
    \caption{Simplifying the multi-cycle case. As we commute the $CZ$ gates through the errors, we use the fact that $CZ$ is self-inverse to cancel out repeated copies. The resulting channel has at most 4-site Lindblad generators no matter how many cycles are involved.}
    \label{fig:cycling_reduction}
\end{figure}

\subsection{Measurement error}
\label{sec:measurement_error}
In real circuits, state preparation and measurement (SPAM) errors are quite significant\cite{van_den_berg_probabilistic_2023}. We will now show that there is a natural and straightforward way to include these errors in the computational model. 

Here we will focus on the case of measurement errors. State preparation errors can be incorporated using a very similar approach. We may view measurement error as the probability of a Pauli X error immediately before the (Z-basis) measurement. We assume this probability may be different for each qubit, but does not depend on the Pauli basis rotation or number of layers of Clifford gates applied. These errors can be represented by one-site Pauli-Lindblad channels, as illustrated in Figure \ref{fig:spam}a. One may thus compute the likelihood function by the same methods as discussed above. 

In order to learn SPAM error rates and gate-based error rates simultaneously, one needs access to measurements at multiple circuit depths. Data from only a single depth is not sufficient to disambiguate SPAM and internal error parameters. Previous approaches to identifying SPAM have attempted to characterize it separately from gate-based noise\cite{van_den_berg_model-free_2022, magesan_characterizing_2012, helsen_new_2019}. A maximum likelihood framework, on the other hand, can naturally learn SPAM errors and gate-based errors concurrently. The only requirement is that the data be informationally sufficient to disambiguate internal and SPAM error rates\cite{chen_disambiguating_2026}. 

Figure \ref{fig:spam}b shows results on $10$ qubits. The circuit in this case is simply idling. The noise channels are again generated randomly. Internal error probabilities are chosen from the uniform distribution between $0$ and $10^{-3}$, while measurement error probabilities are chosen from the uniform distribution between $0$ and $10^{-2}$. We take $5$ million samples each at depths $0$ and $50$. With this data, we see that the maximum likelihood estimator learns parameters which are generally close to the true values. Relative accuracy is worse for very small parameters, perhaps because the corresponding errors occur only very rarely in the dataset. 

\begin{figure}[h]
    \centering
    \begin{tikzpicture}
        \begin{scope}
            \node[anchor=north west,inner sep=0] (image_a) at (0,0)
            {\includegraphics[width=0.4\columnwidth]{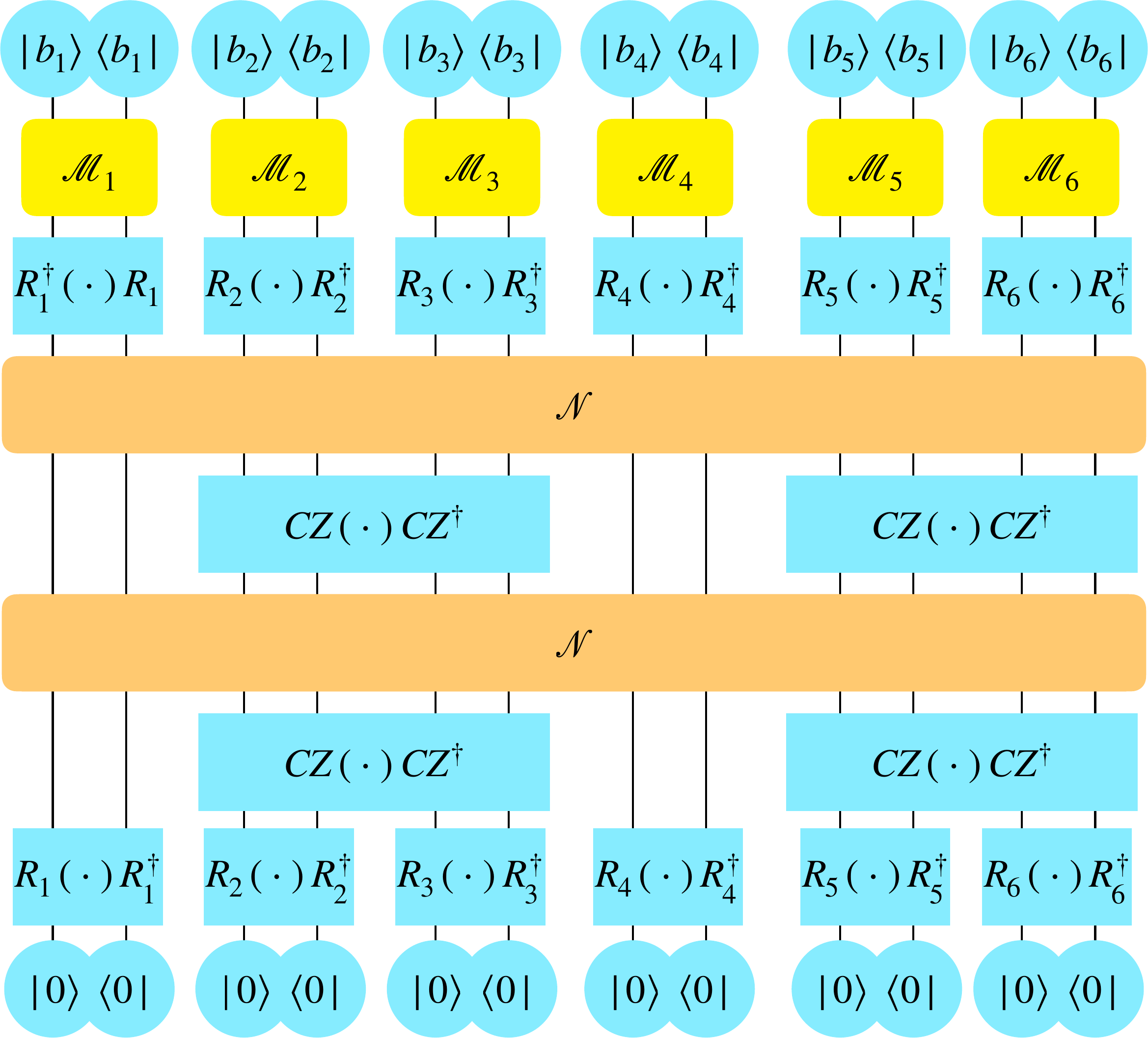}};
            \node [anchor=north west] (note) at (-0.5,0) {\small{\textbf{a)}}};
        \end{scope}
    \end{tikzpicture}
    \begin{tikzpicture}
        \begin{scope}
            \node[anchor=north west,inner sep=0] (image_b) at (0,0)
            {\includegraphics[width=0.5\columnwidth]{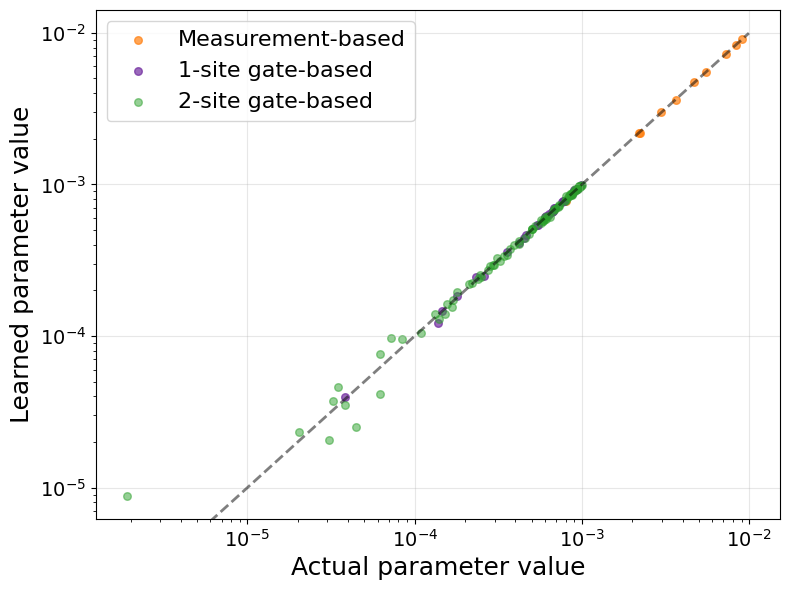}};
            \node [anchor=north west] (note) at (-0.5,0.3) {\small{\textbf{b)}}};
        \end{scope}
    \end{tikzpicture}
    \vspace*{-0.4cm}
    \caption{(a) Circuit model used for learning SPAM. Measurement errors $\mathcal{M}$ correspond to bit flips immediately before measurement. Note that the effective channel $\mathcal{N}$ depends on the cycle count, as shown in Figure \ref{fig:cycling_reduction}. (b) Successful learning of measurement error on 10 qubits. We see that learned estimates of both gate-based error and SPAM error parameters are close to the true values, with larger relative errors in small parameters. $5 \times 10^{6}$ training samples were taken at cycle counts 0 and 50. The channel is sampled randomly with maximum measurement error rate $10^{-2}$ and maximum gate-based error rate $10^{-3}$.
    }
    \label{fig:spam}
\end{figure}

\section{Discussion}
We have given an efficient noise-learning algorithm using MLE. MLE is known to be asymptotically optimal; we have shown that its advantage over other available methods is large enough to be important in practice. There are a number of possible extensions of our approach which may be valuable directions for future work. 

\subsection{Generalizing beyond 1D}
Our method for efficient exact evaluation of the likelihood function relies on a 1D (or at least treelike) connectivity pattern. The typical chip layout, however, is 2D. We propose three possible strategies for non-1D geometries. None are implemented in this work, but they offer promising approaches for future study. 

The first and simplest strategy is to cut the network into treelike subnetworks. This can be done by freezing certain latent variables, then summing explicitly over the possible configurations of those frozen latent variables. This method gives an exact value. It requires exponential time, but error mitigation itself is also only practical for modest system sizes, and so the classical overhead may not be limiting in practice. For some chip layouts (e.g., heavy hex) one needs to ``cut'' only a small fraction of the edges in order to obtain a decomposition into treelike subgraphs.

With 2D geometric locality, a second possibility is what we term ``patching'', inspired by Ref~\onlinecite{napp_efficient_2022}. This exploits the fact that correlations are quite short-range. The basic idea is to choose a distance threshold $\zeta$ beyond which patches are assumed to be independent. One may then divide a 2D lattice into a checkerboard pattern. We first compute unconditional probabilities for a set of patches separated by a distance $\zeta$. The next step is to fill in the patches between them, computing probabilities conditional only on the neighboring patches which have already been filled in. We expect the precision of this algorithm to be good so long as the conditional mutual information between nonadjacent patches is small. It seems plausible that this condition holds for most noise models, but further study is needed.

A third option is the standard belief propagation algorithm\cite{yedidia_understanding_2003}. Belief propagation is exact for treelike connectivity, but approximate for graphs with loops. Suppose for a moment that our errors remain $2$-local. For the heavy hex lattice, the shortest possible loop then involves 6 independent latent variables (in the case where 6 CZs are arranged around the same hex). In the actual data, the probability that these correlation-inducing errors will occur concurrently is approximately $(10^{-3})^{6}$, so to learn any single local parameter we may effectively truncate our interactions to a locally treelike neighborhood. At larger circuit depths the situation is a bit worse. Appendix \ref{app:fisher}, however, shows that even at the optimal depth, each individual local error only occurs with probability $\sim 0.017$. This raises the hope that belief propagation, too, may give a very good approximation to the likelihood.

\subsection{Generalization to other gates and channels}
Although the Bayesian network representation is conceptually and computationally useful, it depends heavily on the particular structure of this problem. Tensor networks offer a more flexible approach to computing the likelihood function\cite{torlai_quantum_2023, mangini_tensor_2024, cao_differentiable_2026}. In the 1D case they can handle arbitrary non-Pauli channels, so long as the channel can be decomposed into a product of local channels as in Figure \ref{fig:network_reductions}a. Furthermore, rather than learning the errors associated with a single-layer CZ circuit, a tensor network can incorporate any efficiently contractible circuit (e.g., multilayer shallow 1D circuits with non-Clifford gates). 

On the other hand, generalizing the tensor network algorithm to non-1D systems seems more challenging. Furthermore, the constant-factor computational overhead of constructing and contracting the tensor network for each training sample is relatively high. 

\subsection{Model pruning and goodness-of-fit testing}
We have assumed that only Lindblad generators below some geometric locality cutoff are important elements of the noise model. In a real system, this may not always be the case\cite{govia_bounding_2025}. It is interesting to ask whether MLE allows us to detect the presence of unmodeled error terms. Some common tools include the Akaike or Bayesian Information Criterion or the Generalized Likelihood Ratio Test, both of which are made possible with access to the likelihood function. 

If any specific additional error term is suspected, these methods provide a strategy for deciding if it should be included. But because the number of possible generators is exponentially large, there is an additional challenge: One must somehow decide which extensions of the model are likely to be worth considering. One possible strategy is to look for training samples which are especially surprising under the current model. One can then attempt to construct Lindblad generators that could explain those particular training samples. This may be a useful direction for future work.

\subsection{Improved inversion}
For PEC, overestimating an error rate is more harmful than underestimating it. The impact of an overestimate grows exponentially with circuit depth, while an underestimate merely fails to correct some of the error. It may thus be beneficial to choose a model for PEC with error rates which are a bit smaller than the point estimates learned from the data. The Fisher metric, which can be computed from the likelihood function, could lead to a systematic approach to this problem.

\begin{acknowledgments}
We acknowledge funding from the IBM-Illinois Discovery Accelerator Institute and useful discussions with Kristan Temme.
\end{acknowledgments}

\bibliography{references}

\appendix
\section{Empirical Pauli Fidelities are not a sufficient statistic}
\label{app:insufficient}
We will show that the empirical Pauli fidelities do not capture all of the important information in a dataset. In other words, they are a lossy summary of the observed data. The formal version of this statement is that they are not a \textbf{sufficient statistic}, which is to say that the likelihood is not functionally dependent on the EPFs. 

\begin{theorem}
    Consider a channel with Lindblad generators $\mathcal{P}$ and parameters $\mathbf{w}$. The empirical Pauli fidelities $\hat{f}_P : P \in \mathcal{P}$ are not necessarily a sufficient statistic for $\mathbf{w}$. 
\end{theorem}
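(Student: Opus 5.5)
Since the statement only asserts that the EPFs are \emph{not necessarily} sufficient, a single explicit counterexample settles it. I would build the smallest one: two datasets that produce identical empirical Pauli fidelities $\hat f_P$ for every $P \in \mathcal{P}$, yet whose likelihood functions $\mathcal{L}(\mathbf{w})$ are not proportional as functions of $\mathbf{w}$. By the factorization criterion, sufficiency of $T=(\hat f_P)_{P\in\mathcal{P}}$ would force $\mathcal{L}(\mathbf{w}\,|\,\text{data}) = g(T,\mathbf{w})\,h(\text{data})$. Two datasets with equal $T$ would then have likelihood ratio independent of $\mathbf{w}$, so exhibiting a $\mathbf{w}$-dependent ratio is enough.

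Concretely, I would take two qubits with no gates ($C=I$), all measurements in the $Z$ basis, and Lindblad generators $\mathcal{P} = \{X_1, X_2, X_1X_2\}$ with rates $w_1, w_2, w_{12}$. Using the Bayesian-network reduction from the main text, each shot is a pair of bits. The prepared state is $\ket{00}$, so the outcome is the XOR of the latent flips. The four outcome probabilities $q_{00}, q_{01}, q_{10}, q_{11}$ are explicit polynomials in $w_1, w_2, w_{12}$, and the log-likelihood is $\sum_{b} n_b \log q_b(\mathbf{w})$, where $n_b$ is the count of outcome $b$.

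The EPF statistics only involve the $Z$-type Paulis measured in this setting: $\hat f_{Z_1} \propto n_{00}+n_{01}-n_{10}-n_{11}$, $\hat f_{Z_2} \propto n_{00}+n_{10}-n_{01}-n_{11}$, and $\hat f_{Z_1Z_2} \propto n_{00}+n_{11}-n_{01}-n_{10}$. Together with $N=\sum_b n_b$, these four linear functionals are invertible on the counts, so in this fully measured case the EPFs do determine all four $n_b$. That means this example fails, and the counterexample must instead come from the restriction to $P\in\mathcal{P}$. The fix is to let the EPF summary contain fewer independent linear functionals than there are outcome counts.

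I would therefore take $\mathcal{P}=\{X_1X_2\}$ alone, with a single rate $w$ and $Z$-basis data. The EPF summary is then only $\hat f_{Z_1Z_2}$, which reports $n_{00}+n_{11}$ versus $n_{01}+n_{10}$. Under this model the likelihood is $(1-w)^{n_{00}} w^{n_{11}} \cdot 0^{n_{01}+n_{10}}$. Two datasets with the same $n_{00}+n_{11}$ but a different split, for example all $00$ versus all $11$, give likelihoods $(1-w)^N$ and $w^N$. Their ratio clearly depends on $w$, which proves non-sufficiency.

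The main obstacle is making sure the example is faithful to the EPF definition in the main text: $\hat f_P$ is averaged over all measurement bases containing $P$, and perhaps only generator-matched $P$ are used. If a more robust example is preferred, I would add single-site generators $X_1$ and $X_2$ with small rates, which removes the zero probabilities. Then I would check that the ratio of the two likelihoods still varies with $\mathbf{w}$, by comparing its values at two parameter points.
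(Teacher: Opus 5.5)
Your mechanism is the same as the paper's. $\hat f_P$ sees only the parity $(-1)^{\sum_{i\in\text{supp}(P)}b_i}$, so outcomes with equal parity but different bits are merged: $00$ vs.\ $11$ for you, $10$ vs.\ $01$ in the paper. Your factorization-criterion formulation is also a sharper version of what the paper checks.

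The gap is that the statistic you prove insufficient is not the one in the theorem. The theorem's summary is $(\hat f_P)_{P\in\mathcal{P}}$ with $\mathcal{P}$ the set of Lindblad generators. For $\mathcal{P}=\{X_1X_2\}$ that summary is $\hat f_{X_1X_2}$, not $\hat f_{Z_1Z_2}$; note $Z_1Z_2\notin\mathcal{P}$. Moreover, $\hat f_{X_1X_2}$ averages over shots taken in a basis containing $X_1X_2$. Your $Z$-only design has no such shots, so the theorem's statistic is not even defined in your example.

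You make the same $X\to Z$ substitution in your first attempt, and under that rule your ``robust'' variant defeats itself. Adding $X_1,X_2$ to $\mathcal{P}$ would add $\hat f_{Z_1},\hat f_{Z_2}$ to the summary. These already distinguish all-$00$ from all-$11$, and your own first computation shows the summary would then determine every count, i.e.\ be sufficient.

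The fix is to put the Pauli whose parity you exploit into $\mathcal{P}$, and to take data in a basis containing each $P\in\mathcal{P}$. For instance, take $\mathcal{P}=\{X_1X_2,Z_1Z_2\}$ with one shot each in the $ZZ$ and $XX$ bases. $X_1X_2$ flips both $Z$-bits and $Z_1Z_2$ flips both $X$-bits, so every possible outcome has even parity and $\hat f_{X_1X_2}=\hat f_{Z_1Z_2}=1$ always. Yet the datasets $\{ZZ{:}\,00,\ XX{:}\,00\}$ and $\{ZZ{:}\,11,\ XX{:}\,00\}$ have likelihoods $(1-w_{X_1X_2})(1-w_{Z_1Z_2})$ and $w_{X_1X_2}(1-w_{Z_1Z_2})$, whose ratio depends on $\mathbf{w}$.

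The paper instead takes $\mathcal{P}=\{XX,ZZ,XZ\}$ with one shot per basis, and compares $ZZ$-basis outcome $10$ with $01$. Both give $\hat f_{ZZ}=-1$, but $01$ forces $XX$ and $XZ$ to co-occur. So at $w_{XZ}=\tfrac13$, $w_{XX}=w_{ZZ}=0$, one likelihood is positive and the other is zero. That example is less degenerate than the repaired version of yours: $\hat f_{ZZ}$ there genuinely responds to $w_{XZ}$, and only the identity of the flipped bit is lost. As your first attempt correctly anticipated, it works precisely because the weight-one parities $Z_1,Z_2$ are not in $\mathcal{P}$.
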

\begin{proof}
    Consider \(\mathcal{P} = \{XX, ZZ, XZ\}\). We will construct a pair of datasets $\mathcal{D}_1, \mathcal{D}_2$, which give the exact same Pauli fidelities, but different likelihood functions. 
    
    We take one measurement in each basis. The measurement outcomes are given in Tables \ref{tab:d1} and \ref{tab:d2}. From these outcomes we compute empirical fidelities. The general formula for empirical Pauli fidelities is 
\begin{gather}
    \hat{f}_{P}(\mathcal{D}) = \frac{1}{|M_{P}|}\sum_{\mathbf b \in M_{P}} (-1)^{\sum_{i \in \text{supp}(P)} b_i}
\end{gather}
where  $M_P$ is the set of measurements in bases containing $P$ as a substring. In this example all Pauli strings have full support and there is only one measurement per basis, so the formula is somewhat simpler. Applying it, we discover that the two datasets give exactly the same fidelity estimates: $f_P(\mathcal{D}_1) = f_P(\mathcal{D}_2)$ for all $P \in \mathcal{P}$. 

\begin{table}[h]
\begin{subtable}
    \centering
    \begin{tabular}{cc c}
        \toprule
        \textbf{String} & \textbf{Syndrome} & \textbf{Fidelity} \\
        \midrule
        $XX$ & $00$ & $1$ \\
        $XZ$ & $00$ & $1$ \\
        $ZZ$ & $10$ & $-1$ \\
        \bottomrule
    \end{tabular}
    \caption{Data set $\mathcal{D}_1$}
    \label{tab:d1}
\end{subtable}
\vspace{2em}
\begin{subtable}
    \centering
    \begin{tabular}{cc c}
        \toprule
        \textbf{String} & \textbf{Syndrome} & \textbf{Fidelity} \\
        \midrule
        $XX$ & $00$ & $1$ \\
        $XZ$ & $00$ & $1$ \\
        $ZZ$ & $01$ & $-1$ \\
        \bottomrule
    \end{tabular}
    \caption{Data set $\mathcal{D}_2$}
    \label{tab:d2}
\end{subtable}
\end{table}

It should be clear that the two datasets do not indicate the same channel behavior. Under $\mathcal{D}_1$, the syndrome seen on our $ZZ$ measurement implies that an $XZ$ error occurred. Under $\mathcal{D}_2$, on the other hand, the syndrome seen on our $ZZ$ measurement implies that $XZ$ and $XX$ errors must have occurred together. So we should assign a larger Lindblad rate to $XX$ if we see $\mathcal{D}_2$ rather than $\mathcal{D}_1$. The $\hat{f}$ fail to capture this difference.

To formalize this intuition, consider parameters
\begin{gather}
    \begin{bmatrix}
        w_{XX} \\
        w_{XZ} \\
        w_{ZZ} \\
    \end{bmatrix} = \begin{bmatrix}
        0 \\
        \frac{1}{3} \\
        0 \\
    \end{bmatrix} \equiv \mathbf{w}^*
\end{gather}
Under $\mathcal{D}_1$, we know an $XZ$ error did not happen in the first two trials and did happen in the third trial, so
\begin{gather}
    \mathcal{L}(\mathbf{w}^* | \mathcal{D}_1) = \left(\frac{2}{3}\right)^2 \cdot \frac{1}{3}
\end{gather}
But under $\mathcal{D}_2$, we can be certain that trial 3 involved both an $XX$ and an $XZ$ error, which never occurs under these parameters. So
\begin{gather}
    \mathcal{L}(\mathbf{w}^* | \mathcal{D}_2) = 0
\end{gather}
This is an example of two cases with the same $\hat{f}$ but distinct $\mathcal{L}$, so the empirical Pauli fidelities are not a sufficient statistic. 
\end{proof}

This example is rather contrived. However, one can construct similar examples for a channel with all possible 1D $2$-local generators included. It seems to be generally true that empirical Pauli fidelities do not capture all of the important information present in the data.

\section{Optimal shot allocation}
\label{app:fisher}
Section \ref{sec:cycling} shows that it is possible and useful to use data from deeper circuits. One may now ask which circuit depths are most informative. More precisely, given a finite sample budget, how should samples be allocated across various circuit depths? The maximum likelihood approach offers a natural framework in which to study this question. 

Previous experiments have typically allocated shots uniformly over some range of depths. We will show that this is not the best choice. PEC requires very precise tomography, and so we are interested mostly in the limit of large sample size. In this limit the MLE is asymptotically normal, with a covariance matrix given by the inverse of the Fisher metric\cite{kwon_universal_2026}. The Fisher metric may be computed from the likelihood function by 
\begin{gather*}
    F_{ij} = \mathbb{E}_{\mathbf{x}}\left[- \frac{\partial^2 \log \mathcal{L}(w_1...w_m|\mathbf{x})}{\partial w_i \partial w_j}\right]
\end{gather*}
Suppose we have $n_{d}$ shots taken in each measurement basis $b$ at circuit depth $d$. Then the Fisher information may be written more explicitly as
\begin{gather*}
    F_{ij} = \sum_{d \in \text{depths}} \sum_{b \in \text{bases}} n_{d} \mathbb{E}_{\mathbf{x} \sim d,b}\left[- \frac{\partial^2 \log \mathcal{L}^{(d,b)}(w_1...w_m|\mathbf{x})}{\partial w_i \partial w_j}\right]
\end{gather*}
The expected mean squared error of the learned parameters is given by $\frac{\tr F^{-1}}{\text{\# params}}$. We then search for the distribution of shots which minimizes this error, subject to a constraint on the total shot budget. Figure \ref{fig:var_vs_depth}a shows how the expected mean squared error depends on the depth at which data is taken, for data taken only at a single depth. 

\begin{figure}[h]
    \centering
    \begin{tikzpicture}
        \begin{scope}
            \node[anchor=north west,inner sep=0] (image_a) at (0,0)
            {\includegraphics[width=0.6\columnwidth]{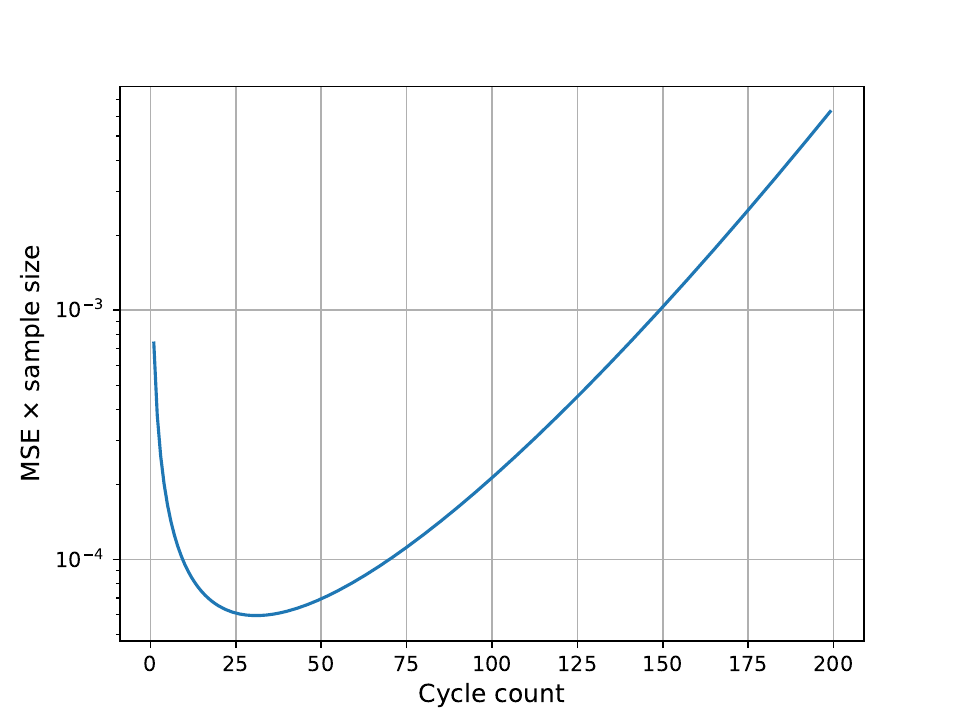}};
            \node [anchor=north west] (note) at (-0.5,0) {\small{\textbf{a)}}};
        \end{scope}
    \end{tikzpicture}
    \vspace*{-0.4cm}
    \caption{Typical error of our estimated parameters vs. circuit depth at which data is collected. We use the Fisher information to compute the asymptotic MSE. The amount of information about the channel parameters extractable from the data improves rapidly at small depths, but deteriorates at large depth as the effective error rates all approach $\frac{1}{2}$. This data is for a 10-qubit channel with $w_i$ drawn from the uniform distribution over $(0, 10^{-3})$.
    }
    \label{fig:var_vs_depth}
\end{figure}

We find empirically that the optimal choice is always to allocate the entire budget to a single depth. In the absence of SPAM errors, there is no apparent advantage to spreading data across multiple depths. Furthermore, this depth seems to scale roughly as the inverse of the average basis error probability, roughly $\left(41 \cdot \frac{1}{\text{\# params}}\sum_i w_i\right)^{-1}$. This is shown in Figure \ref{fig:optimal_depth_details}a. Figure \ref{fig:optimal_depth_details}b compares this approach against the more standard strategy of distributing shots uniformly over some range of depths. We see that the single-depth strategy is slightly better, but more sensitive to the chosen depth.

\begin{figure}[h]
    \centering
    \begin{tikzpicture}
        \begin{scope}
            \node[anchor=north west,inner sep=0] (image_a) at (0,0)
            {\includegraphics[width=0.43\columnwidth]{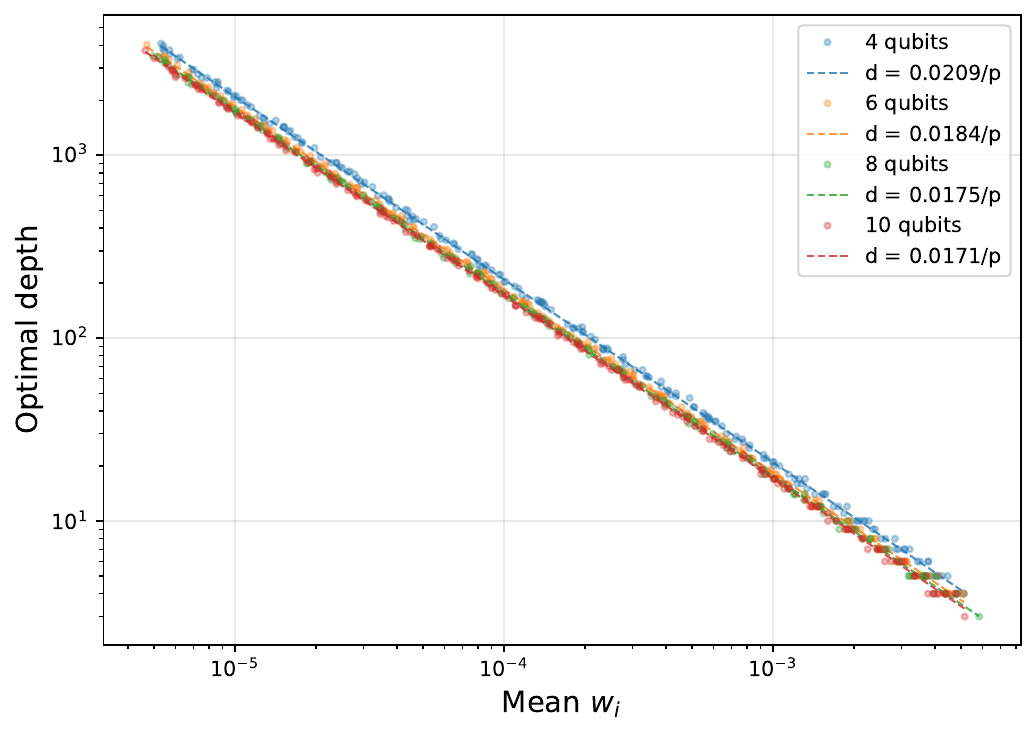}};
            \node [anchor=north west] (note) at (-0.5,0) {\small{\textbf{a)}}};
        \end{scope}
    \end{tikzpicture}
    \begin{tikzpicture}
        \begin{scope}
            \node[anchor=north west,inner sep=0] (image_b) at (0,.5)
            {\includegraphics[width=0.46\columnwidth]{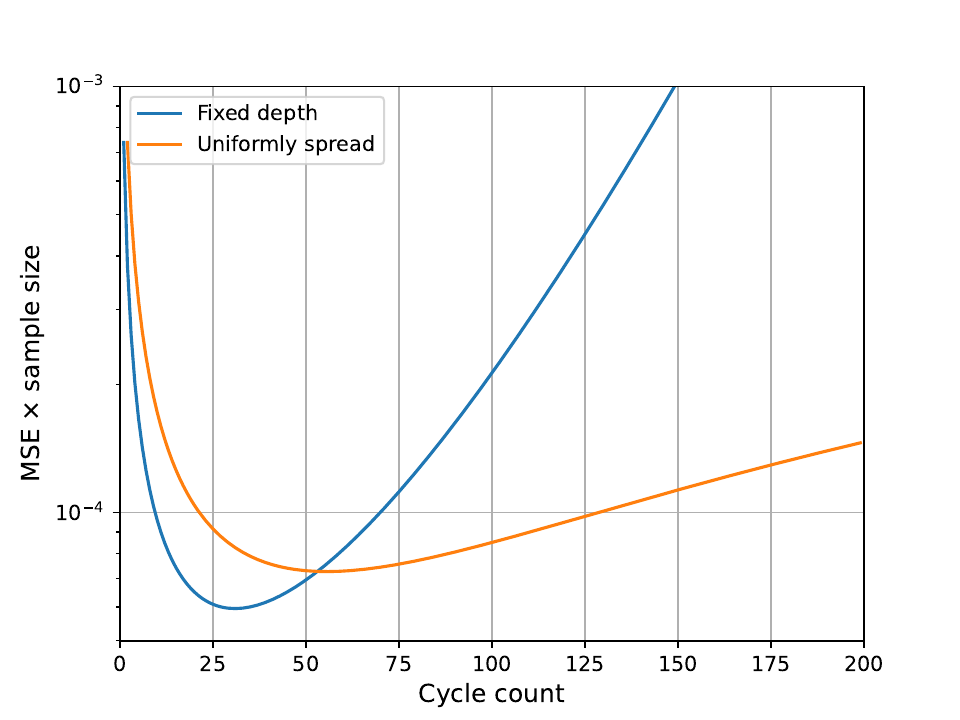}};
            \node [anchor=north west] (note) at (-0.5,0) {\small{\textbf{b)}}};
        \end{scope}
    \end{tikzpicture}
    \vspace*{-0.4cm}
    \caption{(a) Most informative depth at which to take data for many channels with $w_i$ sampled from uniform distributions. We observe an optimal depth generally close to $\frac{0.017}{\text{typical error rate}}$, with slightly deeper depths better for small systems. (b) Collecting data at a single depth $d$ vs. at depths spread uniformly between $0$ and $d$. The single-depth strategy is slightly better, but the uniform strategy is less sensitive to the choice of cutoff.
    }
    \label{fig:optimal_depth_details}
\end{figure}

There are two difficulties with this approach. First, the true parameters are unknown, as is $F$. However, we can first use some shots allocated heuristically to obtain a rough estimate of the true parameters. We can then use this rough estimate to decide how to allocate our remaining shots. In practice our tests are run with $10\%$ of the shot budget used in the former phase and $90\%$ in the latter. The second difficulty is that the expected value is expensive to compute exactly. It is necessary to instead estimate the elements of $F$ by a Monte Carlo procedure. This introduces some noise and classical overhead, but is expected to converge reasonably quickly.

\subsection{Shot allocation with measurement error}
Simultaneous learning of measurement error rates and internal error rates requires data from multiple depths. We may again use the Fisher information to determine how a fixed shot budget ought to be allocated across depths. 

The natural choice of cost metric here is not so simple. For a single layer of gates, measurement errors are typically much more common than internal error parameters, which suggests that it is more important to correct measurement error. But for a deep circuit with PEC, errors in our estimates of the internal parameters will accumulate, leading to large errors in the output. This suggests that one ought to care more about tomography of measurement parameters when the eventual circuit of interest is shallow, but care more about internal parameters when the target circuit is sufficiently deep. 

Here we do not resolve this tradeoff. We instead consider cost functions that are a weighted sum of the expected mean squared error of SPAM parameter estimates and the expected mean squared error of internal parameter estimates. We may then study the Pareto frontier of shot allocation choices in this 2D plane. This is shown in Figure \ref{fig:pareto}. In the limit of asymptotic normality, the cost function can be computed from the Fisher matrix. This makes it numerically tractable to find the optimal distribution of data points over various depths. The data shown is only for idling circuits, with no gates applied, but the generalization to a repeated layer of Clifford gates is again straightforward.

\begin{figure}[h]
    \centering
    \includegraphics[width=0.6\linewidth]{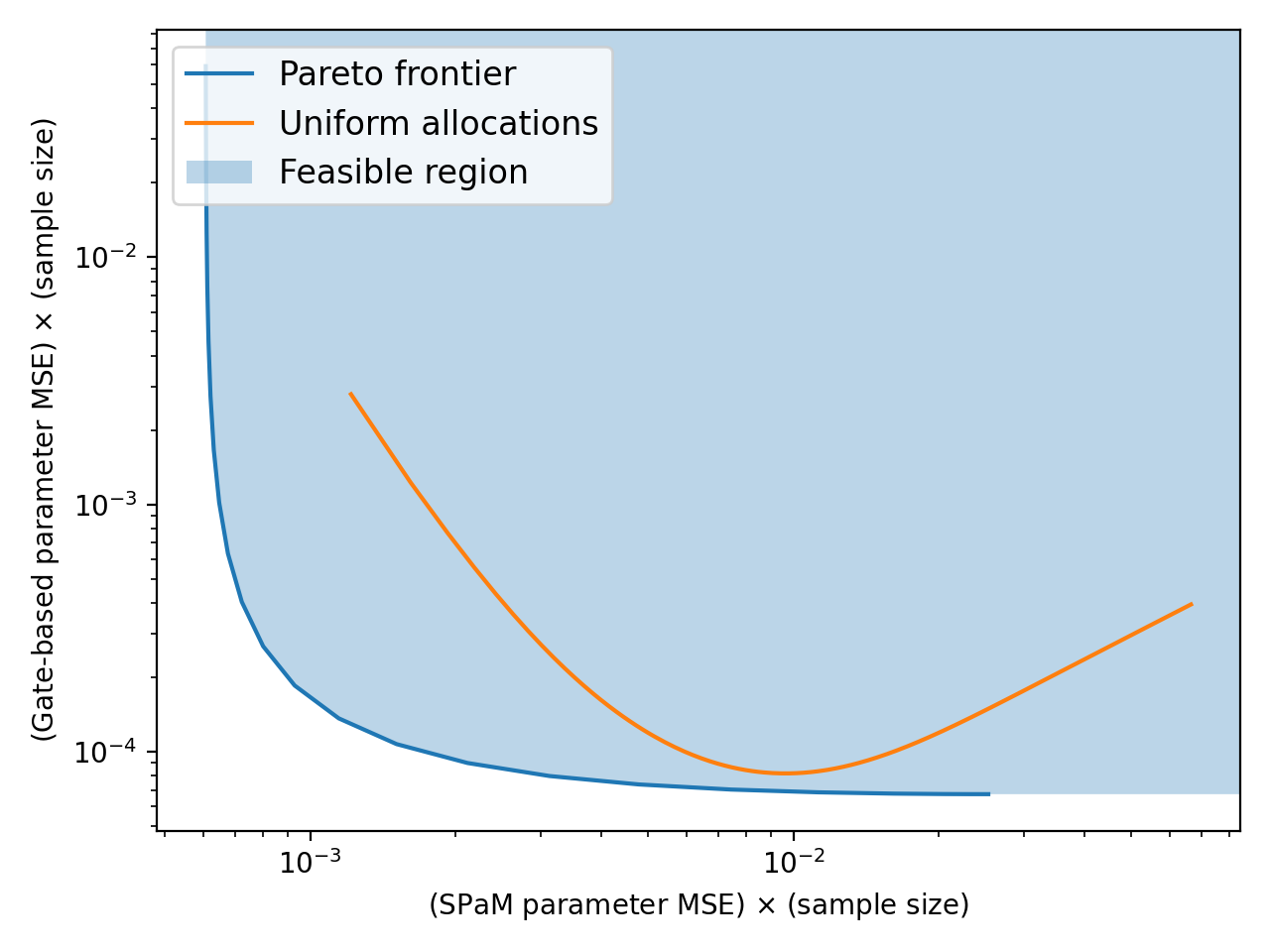}
    \caption{Feasible set for shot allocation. Allocating shots uniformly over some range of depths, as has been done in previous works, results in errors $\sim 10 \times$ larger than choosing an allocation on the Pareto frontier. This data is for a random 10-qubit channel with maximum error probability $\sim 10^{-3}$.}
    \label{fig:pareto}
\end{figure}

We find empirically that the optimal allocation involves data taken at only two depths, one of which is zero. In other words, it is best to first use some of the shot budget to learn the measurement error, with internal errors ``turned off'' (since no gates are applied). The rest of the shot budget can then be used at a particular large depth in order to learn the internal parameters. The more traditional strategy of allocating shots uniformly over some range of depths does not intersect the Pareto frontier.

\end{document}